\newif\iflong
\renewcommand{\refappendix}[1]{\cite{longversion}}
\title{Leaderless State-Machine Replication: Specification, Properties, Limits\\ (Extended Version)}
\title{Leaderless State-Machine Replication: Specification, Properties, Limits}
\titlerunning{Leaderless State-Machine Replication: Specification, Properties, Limits}
\author{Tuanir Fran\c{c}a Rezende}{Telecom SudParis, France }{tuanir.franca-rezende@telecom-sudparis.eu} {}{}
\author{Pierre Sutra}{Telecom SudParis, France }{pierre.sutra@telecom-sudparis.eu}{}{}
\authorrunning{T.\,F. Rezende, P. Sutra}
\keywords{Fault Tolerance, State Machine Replication, Consensus}
\begin{document}

\maketitle

\begin{abstract}
  Modern Internet services commonly replicate critical data across several geographical locations using state-machine replication (SMR).
  Due to their reliance on a leader replica, classical SMR protocols offer limited scalability and availability in this setting.
  To solve this problem, recent protocols follow instead a leaderless approach, in which each replica is able to make progress using a quorum of its peers.
  In this paper, we study this new emerging class of SMR protocols and states some of their limits.
  We first propose a framework that captures the essence of leaderless state-machine replication (Leaderless SMR).
  Then, we introduce a set of desirable properties for these protocols: (R)eliability, (O)ptimal (L)atency and (L)oad Balancing.
  We show that protocols matching all of the ROLL properties are subject to a trade-off between performance and reliability.
  We also establish a lower bound on the message delay to execute a command in protocols optimal for the ROLL properties.
  This lower bound explains the persistent chaining effect observed in experimental results.
\end{abstract}



\section{Introduction}
\labsection{introduction}

The standard way of implementing fault-tolerant distributed services is state-machine replication (SMR)~\cite{smr}.
In SMR, a service is defined by a deterministic state machine, and each process maintains its own local copy of the machine.
Classical SMR protocols such as Paxos \cite{paxos} and Raft \cite{raft} rely on a leader replica to order state-machine commands.
The leader orchestrates a growing sequence of agreements, or consensus, each defining the next command to apply on the state machine.
Such a scheme has however clear limitations, especially in a geo-distributed setting.
First, it increases latency for clients that are far away from the leader.
Second, as the leader becomes a bottleneck or its network gets slower, system performance decreases.
Last, this approach harms availability because when the leader fails the whole system cannot serve new requests until an election takes place.

To sidestep the above limitations, a new class of leaderless protocols has recently emerged \cite{mencius,epaxos,caesar,atlas,alvin,clockrsm}.
These protocols allow any replica to make progress as long as it is able to contact enough of its peers.
Mencius \cite{mencius} pioneered this idea by rotating the ownership of consensus instances.
Many other works have followed, and in particular the Egalitarian Paxos (EPaxos) protocol \cite{epaxos}.
As Generalized Paxos \cite{gpaxos}, EPaxos orders only non-commuting, aka. conflicting, state-machine commands.
To this end, the protocol maintains at each replica a directed graph that stores the execution constraints between commands.
Execution of a command proceeds by linearizing the graph of constraints.
In the common case, EPaxos executes a command after two message delays if the fast path was taken, that is, if the replicas spontaneously agree on the constraints, and four message delays otherwise.

\paragraph{Problem statement}
Unfortunately the latency of EPaxos may raise in practice well above four message delays.
To illustrate this point, we ran an experimental evaluation of EPaxos, Paxos and Mencius in Google Cloud Platform.
The results are reported in \reffigure{latency:cdf}, where we plot the cumulative distribution function (CDF) of the command latency for each protocol.
In this experiment, the system spans five geographical locations distributed around the globe, and each site hosts 128 clients that execute \emph{no-op} commands in closed-loop.
\reffigure{latency:distance} indicates the distance between any two sites.
The conflict rate among commands varies from 0\% to 30\%.%
\footnote{
  Each command has a key and any two commands conflict, that is they must be totally ordered by the protocol, when they have the same key.
  When a conflict rate $\rho$ is applied, each client picks key $42$ with probability $\rho$, and a unique key otherwise.
}
We measure the latency from the submission of a command to its execution (at steady state).

Two observations can be formulated at the light of the results in \reffigure{latency}.
First, the tail of the latency distribution in EPaxos is larger than for the two other protocols and it increases with the conflict rate.
Second, despite Mencius clearly offering a lower median latency, it does not exhibit such a problem.

\begin{figure}
  \begin{minipage}{.6\textwidth}
  \begin{subfigure}{\textwidth}
  \centering
  \fontsize{13}{15}\selectfont
  \scalebox{.51}{\input{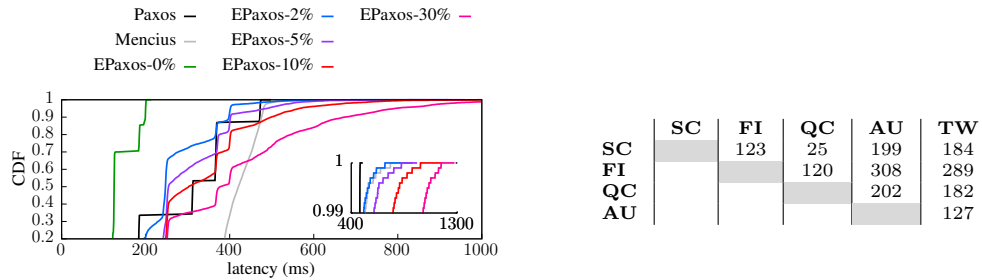}}
  \caption{
    \labfigure{latency:cdf}%
    Latency distribution when varying the conflict rate.
  }
  \end{subfigure}
  \end{minipage}
  \begin{minipage}{.4\textwidth}
    \vspace{5em}%
    \begin{subfigure}{\textwidth}
      \centering
      \scriptsize
      \begin{tabular}{l|c|c|c|c|c}
        & \textbf{SC} & \textbf{FI} & \textbf{QC} & \textbf{AU} & \textbf{TW} \\
        \textbf{SC} & \cellcolor{black!15} & 123         & 25                  & 199                        & 184      \\
        \textbf{FI} &             & \cellcolor{black!15} & 120                  & 308                         & 289      \\
        \textbf{QC} &             &                      & \cellcolor{black!15} & 202                        & 182      \\
        \textbf{AU} &             &                      &                      & \cellcolor{black!15}       & 127      \\
      \end{tabular}
      \vspace{2.5em}%
      \caption{
        \labfigure{latency:distance}
        Ping distance between sites (in ms).
      }
    \end{subfigure}
  \end{minipage}
  \caption{
    \labfigure{latency}
    Performance comparison of EPaxos, Paxos and Mencius
    -- \emph{5 sites: South Carolina (SC), Finland (FI), Canada (QC), Australia (AU), Taiwan (TW, leader); 128 clients per site; \emph{no-op} service}.
  }
  \vspace{-1em}
\end{figure}

\paragraph{Contributions}
In this paper, we provide a theoretical framework to understand and explain the above phenomena.
We study in-depth this new class of leaderless state-machine replication (Leaderless SMR) protocols and state some of their limits.

\paragraph{Paper Outline}
We recall the principles of state-machine replication (\refsection{smr}).
Then, we define Leaderless SMR and deconstruct it into basic building blocks (\refsection{leaderless}).
Further, we introduce a set of desirable properties for Leaderless SMR: (R)eliability, (O)ptimal (L)atency and (L)oad Balancing.
Protocols that match all of the ROLL properties are subject to a trade-off between performance and reliability.
More precisely, in a system of $n$ processes, the ROLL theorem (\refsection{roll}) states that Leaderless SMR protocols are subject to the inequality $2F+f-1 \leq n$, where $n-F$ is the size of the fast path quorum and $f$ is the maximal number of tolerated failures.
%
A protocol is ROLL-optimal when $F$ and $f$ cannot be improved according to this inequality.
We establish that ROLL-optimal protocols are subject to a chaining effect that affect their performance (\refsection{chaining}).
As \epaxos is ROLL-optimal and Mencius not, the chaining effect explains the performance results observed in \reffigure{latency}.
We discuss the implications of this result (\refsection{discussion}) then put our work in perspective (\refsection{relatedwork}) before closing (\refsection{conclusion}).

\section{State machine replication}
\labsection{smr}

State-machine replication (SMR) allows a set of distributed processes to construct a linearizable shared object.
The object is defined by a deterministic state machine together with a set of commands.
Each process maintains its own local replica of the machine.
An SMR protocol coordinates the execution of commands applied to the state machine, ensuring that the replicas stay in sync.
This section recalls the fundamentals of SMR, as well as its generalization that leverages the commutativity of state-machine commands.

\subsection{System model}
\labsection{smr:model}

We consider the standard model of wait-free computation in a distributed message-passing system where processes may fail-stop \cite{flp}.
In \cite{CT96}, the authors extend this framework to include failure detectors.
This paper follows such a model of distributed computation.
Further details appear in \refappendix{model}.

\subsection{Classic SMR}
\labsection{smr:classic}


State machine replication is defined over a set of $n \geq 2$ processes $\procSet$ using a set $\cmdSet$ of state-machine commands.
Each process $p$ holds a log, that is a totally ordered set of entries that we assume unbounded.
Initially, each entry in the log is empty (i.e., $log_p[i]=\bot$ for $i \in \naturalSet$), and over time it may include one state-machine command.
The operator ($\log_p \append \cmdc$) \emph{appends} command $\cmdc$ to the log, assigning it to the next free entry.

Commands are submitted by the processes that act as proxies on behalf of a set of remote clients (not modeled).
A process takes the step $\submit(\cmdc)$ to submit command $\cmdc$ for inclusion in the log.
Command $\cmdc$ is \emph{decided} once it enters the log at some position $i$.
It is executed against the state machine when all the commands at lower positions ($j < i$) are already executed.
When the command is executed, its response value is sent back to the client.
For simplicity, we shall consider that two processes may submit the same command.

When the properties below hold during every execution, the above construct ensures that the replicated state machine implements a linearizable shared object.
\begin{description}
\item[\textbf Validity:]
  A command is decided once and only if it was submitted before.
\item[\textbf Stability:]
  If $\log_p[i] = \cmdc$ holds at some point in time, it is also true at any later time.
\item[\textbf Consistency:]
  For any two processes $p$ and $q$, if $\log_p[i]$ and $\log_q[i]$ are both non-empty, then they are equal.
\end{description}

\subsection{Generic SMR}
\labsection{smr:generic}

In their seminal works, Pedone and Schiper \cite{gb} and concurrently Lamport \cite{gpaxos} introduce an alternative approach to Classic SMR.
They make the key observation that if commands submitted to the state machine commute, then there is no need to order them.
Leveraging this, they replace the totally-ordered log used in Classic SMR by a partially-ordered one.
We call this approach Generic SMR.

Two commands $\cmdc$ and $\cmdd$ do not commute when for some state $s$, applying $\cmdc \cmdd$ to $s$ differs from applying $\cmdd \cmdc$.
This means that either both sequences do not lead to the same state, or one of the two commands does not return the same response value in the two sequences.
Generic SMR relies on the notion of \emph{conflicts} which captures a safe over-approximation of the non-commutativity of two state-machine commands.
In what follows, conflicts are expressed as a binary, non-reflexive and symmetric relation $\conflict$ over $\cmdSet$.

In Generic SMR, each variable $\log_p$ is a partially ordered log, i.e., a directed acyclic graph \cite{gpaxos}.
In this graph, vertices are commands and any two conflicting commands have a directed edge between them.
We use $G.V$ and $G.E$ to denote respectively the vertices of some partially ordered log $G$ and its edges.
The append operator is defined as follows: $G \append \cmdc \equaldef (G.V \union \{\cmdc\}, G.E \union \{ (\cmdd,\cmdc): \cmdd \in G.V \land \cmdd \conflict \cmdc\}$.
A command is decided once it is in the partially ordered log.
As previously, it gets executed once all its predecessors are executed.

For correctness, Generic SMR defines a set of properties over partially ordered logs similar to Classic SMR.
Stability is expressed in close terms, using a prefix relation between the logs along time.
Consistency requires the existence of a common least upper bound over the partially ordered logs.

To state this precisely, consider two partially ordered logs $G$ and $H$.
$G$ is prefix of $H$, written $G \pref H$, when $G$ is a subgraph of $H$ and for every edge $(\cmda, \cmdb) \in H.E$, if $\cmdb \in G.V$ then $(\cmda,\cmdb) \in G.E$.
Given a set $\G$ of partially ordered logs, $H$ is an \emph{upper bound} of $\G$ iff $G \pref H$ for every $G$ in $\G$.
Two logs $G$ and $H$ are \emph{compatible} iff they have a common upper bound.%
\footnote{
  In \cite{gpaxos}, compatibility is defined in terms of least upper bound between two c-structs.
  For partially ordered logs, the definition provided here is equivalent.
}
By extension, a set $\G$ of partially ordered logs is compatible iff its elements are pairwise compatible.

Based on the above definitions, we may express Generic SMR using the set of properties below.
Validity is identical to Classic SMR and thus omitted.
\begin{description}
\item[\textbf Stability:]
  For any process $p$, at any given time $\log_p$ is prefix of itself at any later time.
\item[\textbf Consistency:]
  The set of all the partially ordered logs is always compatible.  
\end{description}

\section{Leaderless SMR}
\labsection{leaderless}


Some recent protocols \cite{mencius, epaxos} further push the idea of partially ordered log, as proposed in Generic SMR.
In a leaderless state-machine replication (Leaderless SMR) protocol, there is no primary process to arbitrate upon the ordering of commands.
Instead, any process may decide a command submitted to the replicated service.
A command is stable, and thus executable, once the transitive closure of its predecessors is known locally.
As this transitive closure can be cyclic, the log is replaced with a directed graph.

This section introduces a high-level framework to better understand Leaderless SMR.
In particular, we present the notion of dependency graph and explain how commands are decided.
With this framework, we then deconstruct several Leaderless SMR protocols into basic building blocks.
Further, three key properties are introduced: Reliability, Optimal Latency and Load Balancing.
These properties serve in the follow-up to establish lower bound complexity results for this class of protocols.

\subsection{Definition}
\labsection{leaderless:definition}

Leaderless SMR relies on the notion of dependency graph instead of partially ordered log as found in Generic SMR.
A \emph{dependency graph} is a directed graph that records the constraints defining how commands are executed.
For some command $\cmdc$, the incoming neighbors of $\cmdc$ in the dependency graph are its \emph{dependencies}.
As detailed shortly, the dependencies are executed either before or together with $\cmdc$.

In Leaderless SMR, a process holds two mapping: $\deps$ and $\phase$.
The mapping $\deps$ is a dependency graph storing a relation from $\cmdSet$ to $2^{\cmdSet} \union \{\bot,\top\}$.
For a command $\cmdc$, $\phase(\cmdc)$ can take five possible values: $\sPending$, $\sAbort$, $\sCommit$, $\sStable$ and $\sExecute$.
All the phases, except $\sExecute$, correspond to a predicate over $\deps$.

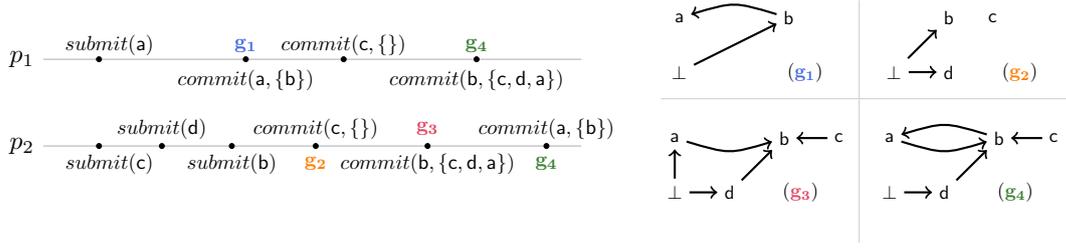
\begin{figure}[t]
  \centering
  \captionsetup{justification=centering}
  \begin{minipage}{.43\textwidth}
    \centering
    \tikzstyle{messageA} = [draw, -latex',Blue, shorten <=0.2em]
\tikzstyle{messageB} = [draw, -latex',Red, shorten <=0.2em]
\tikzstyle{lifeline} = [draw,Gray!60]
\begin{tikzpicture}[
    scale=0.92,
    every node/.style={transform shape},
    auto,
    shifttl/.style={shift={(-\shiftpoints,\shiftpoints)}},
    shifttr/.style={shift={(\shiftpoints,\shiftpoints)}},
    shiftbl/.style={shift={(-\shiftpoints,-\shiftpoints)}},
    shiftbr/.style={shift={(\shiftpoints,-\shiftpoints)}},
  ]
  \large

  \node [] (p1) {$p_1$};
  \node at (0, -1.25) (p2) {$p_2$};

  \path [lifeline] (p1) -- +(8,0);
  \path [lifeline] (p2) -- +(8,0);   


  \node[scale=0.8] at (1.25, 0.2) () {\color{Black}$submit(\cmda)$};
  \filldraw[black] (1.1, 0) circle (1pt) node[anchor=west] {};

  \node[scale=0.8] at (3.2, -0.3) () {\color{Black}$commit(\cmda,\{\cmdb\})$};
  \filldraw[black] (3.2, 0) circle (1pt) node[anchor=west] {};
  \node[scale=0.8] at (3.2, 0.2) () {$\grapha$};

  \node[scale=0.8] at (4.6, 0.2) () {\color{Black}$commit(\cmdc,\{\})$};
  \filldraw[black] (4.6, 0) circle (1pt) node[anchor=west] {};

  \node[scale=0.8] at (6.5, -0.3) () {\color{Black}$commit(\cmdb,\{\cmdc,\cmdd,\cmda\})$};
  \filldraw[black] (6.5, 0) circle (1pt) node[anchor=west] {};
  \node[scale=0.8] at (6.5, 0.2) () {$\graphd$};


  \node[scale=0.8] at (1.25,-1.5) () {\color{Black}$submit(\cmdc)$};
  \filldraw[black] (1.1,-1.25) circle (1pt) node[anchor=west] {};

  \node[scale=0.8] at (2,-1) () {\color{Black}$submit(\cmdd)$};
  \filldraw[black] (2,-1.25) circle (1pt) node[anchor=west] {};

  \node[scale=0.8] at (3,-1.5) () {\color{Black}$submit(\cmdb)$};
  \filldraw[black] (3,-1.25) circle (1pt) node[anchor=west] {};

  \node[scale=0.8] at (4.2,-1) () {\color{Black}$commit(\cmdc,\{\})$};
  \filldraw[black] (4.2,-1.25) circle (1pt) node[anchor=west] {};
  \node[scale=0.8] at (4.2, -1.5) () {$\graphb$};

  \node[scale=0.8] at (5.8,-1.5) () {\color{Black}$commit(\cmdb,\{\cmdc,\cmdd,\cmda\})$};
  \filldraw[black] (5.8,-1.25) circle (1pt) node[anchor=west] {};
  \node[scale=0.8] at (5.8, -1) () {$\graphc$};

  \node[scale=0.8] at (7.5, -1) () {\color{Black}$commit(\cmda,\{\cmdb\})$};
  \filldraw[black] (7.5,-1.25) circle (1pt) node[anchor=west] {};
  \node[scale=0.8] at (7.5, -1.5) () {$\graphd$};
\end{tikzpicture}
  \end{minipage}
  \hfill
  \begin{minipage}{0.38\textwidth}
    \centering
    \captionsetup{justification=centering}
    \begin{tabular}{@{}c|c}
      \noalign{\global\arrayrulewidth=0.1mm}
      \arrayrulecolor{gray!40}
      \begin{subfigure}[t]{0.45\textwidth}
        \centering
        \begin{tikzpicture}
 [
    scale=0.72,
    every node/.style={transform shape},
    auto,
    shifttl/.style={shift={(-\shiftpoints,\shiftpoints)}},
    shifttr/.style={shift={(\shiftpoints,\shiftpoints)}},
    shiftbl/.style={shift={(-\shiftpoints,-\shiftpoints)}},
    shiftbr/.style={shift={(\shiftpoints,-\shiftpoints)}},
  ]
  \begin{scope}[<-,thick]
    \node (a) at (0,0) {$\cmda$};
    \node (b) at (2,0) {$\cmdb$};
    \node (bot) at (0,-1) {$\bot$};
    \draw (b) -- (bot);
    \node (g1) at (2.3,-1) {$(\grapha)$};
    \draw (a) .. controls +(1,.3) .. (b);
  \end{scope}
\end{tikzpicture}
      \end{subfigure}
      &
      \begin{subfigure}[t]{0.45\textwidth}
        \centering
        \begin{tikzpicture}
   [
    scale=0.72,
    every node/.style={transform shape},
    auto,
    shifttl/.style={shift={(-\shiftpoints,\shiftpoints)}},
    shifttr/.style={shift={(\shiftpoints,\shiftpoints)}},
    shiftbl/.style={shift={(-\shiftpoints,-\shiftpoints)}},
    shiftbr/.style={shift={(\shiftpoints,-\shiftpoints)}},
  ]
  \begin{scope}[<-,thick]
    \node (b) at (1,0) {$\cmdb$};
    \node (c) at (1.8,0) {$\cmdc$};
    \node (d) at (1,-1) {$\cmdd$};
    \node (bot) at (0,-1) {$\bot$};
    \draw (b) -- (bot);
    \draw (d) -- (bot);
    \node (g2) at (2.3,-1) {$(\graphb)$};
  \end{scope}
\end{tikzpicture}
      \end{subfigure}\\
      \hline
      \begin{subfigure}[t]{0.45\textwidth}
        \centering
        \begin{tikzpicture}
   [
    scale=0.72,
    every node/.style={transform shape},
    auto,
    shifttl/.style={shift={(-\shiftpoints,\shiftpoints)}},
    shifttr/.style={shift={(\shiftpoints,\shiftpoints)}},
    shiftbl/.style={shift={(-\shiftpoints,-\shiftpoints)}},
    shiftbr/.style={shift={(\shiftpoints,-\shiftpoints)}},
  ]
  \begin{scope}[<-,thick]
    \node (a) at (0,0) {$\cmda$};
    \node (b) at (2,0) {$\cmdb$};
    \node (c) at (3,0) {$\cmdc$};
    \node (d) at (1,-1) {$\cmdd$};
    \node (bot) at (0,-1) {$\bot$};
    \draw (b) .. controls +(-1,-.3) .. (a);
    \draw (b) -- (c);
    \draw (b) -- (d);
    \draw (a) -- (bot);
    \draw (d) -- (bot);
    \node (g3) at (2.3,-1) {$(\graphc)$};
  \end{scope}
\end{tikzpicture}
      \end{subfigure}
      &
      \begin{subfigure}[t]{0.45\textwidth}
        \centering
        \begin{tikzpicture}
   [
    scale=0.72,
    every node/.style={transform shape},
    auto,
    shifttl/.style={shift={(-\shiftpoints,\shiftpoints)}},
    shifttr/.style={shift={(\shiftpoints,\shiftpoints)}},
    shiftbl/.style={shift={(-\shiftpoints,-\shiftpoints)}},
    shiftbr/.style={shift={(\shiftpoints,-\shiftpoints)}},
  ]
  \begin{scope}[<-,thick]
    \node (a) at (0,0) {$\cmda$};
    \node (b) at (2,0) {$\cmdb$};
    \node (c) at (3,0) {$\cmdc$};
    \node (d) at (1,-1) {$\cmdd$};
    \node (bot) at (0,-1) {$\bot$};
    \draw (a) .. controls +(1,.3) .. (b);
    \draw (b) .. controls +(-1,-.3) .. (a);
    \draw (b) -- (c);
    \draw (b) -- (d);
    \draw (d) -- (bot);
    \node (g3) at (2.3,-1) {$(\graphd)$};
    \node (blank) at (0,.6) {};
  \end{scope}
\end{tikzpicture}
      \end{subfigure}
    \end{tabular}
  \end{minipage}
  \caption{
    An example run of Leaderless SMR -- \emph{(left) processes $p_1$ and $p_2$ submit respectively the commands $\{\cmda\}$ and $\{\cmdb,\cmdc,\cmdd\}$; (right) the dependencies graphs formed at the two processes.}
    \labfigure{dep2}
  }
\end{figure}

Initially, for every command $\cmdc$, $\deps(\cmdc)$ is set to $\bot$.
This corresponds to the $\sPending$ phase.
When a process decides a command $\cmdc$, it changes the mapping $\deps(\cmdc)$ to a non-$\bot$ value.
Operation $\commit(\cmdc,D)$ assigns $D$ taken in $2^{\cmdSet}$ to $\deps(\cmdc)$.
Command $\cmdc$ gets aborted when $\deps(\cmdc)$ is set to $\top$.
In that case, the command is removed from any $\deps(\cmdd)$ and it will not appear later on.
Let $\deps^*(\cmdc)$ be the transitive closure of the $\deps$ relation starting from $\{\cmdc\}$.
Command $\cmdc$ is stable once it is committed and no command in $\deps^*(\cmdc)$ is pending.

\reffigure{dep2} depicts an example run of Leaderless SMR that illustrates the above definitions.
In this run, process $p_1$ submits command $\cmda$, while $p_2$ submits in order $\cmdc$, $\cmdd$ then $\cmdb$.
The timeline in \reffigure{dep2} indicates the timing of these submissions.
It also includes events during which process $p_1$ and $p_2$ commits commands.
For some of these events, we depict the state of the dependency graph at the process (on the right of \reffigure{dep2}).
As an example, the two processes obtain the graph $\graphd$ at the end of the run.
In this graph, $\cmda$, $\cmdb$ and $\cmdc$ are all committed, while $\cmdd$ is still pending.
We have $\deps(\cmda)=\{\cmdb\}$ and $\deps(\cmdb)=\{\cmda,\cmdd,\cmdc\}$, with both $\deps^*(\cmda)$ and $\deps^*(\cmdb)$ equal to $\{\cmda,\cmdb,\cmdc,\cmdd\}$.
Only command $\cmdc$ is stable in $\graphd$.

Similarly to Classic and Generic SMR, Leaderless SMR protocols requires that validity holds.
In addition, processes must agree on the value of $\deps$ for stable commands and conflicting commands must see each other.
More precisely,

\begin{description}
\item[Stability:]
  For each command $\cmdc$, there exists $D$ such that if $\cmdc$ is stable then $\deps(\cmdc)=D$.
\item[Consistency:]
  If $\cmda$ and $\cmdb$ are both committed and conflicting, then $\cmda \in \deps(\cmdb)$ or $\cmdb \in \deps(\cmda)$.
\end{description}

A command $c$ gets executed once it is stable.
\refalg{execute} describes how this happens in Leaderless SMR.
To execute command $\cmdc$, a process first creates a set of commands, or \emph{batch}, $\beta$ that execute together with $\cmdc$.
This grouping of commands serves to maintain the following invariant:

\begin{invariant}
  \labinv{del}
  Consider two conflicting commands $\cmdc$ and $\cmdd$.
  If $p$ executes a batch of commands containing $\cmdc$ before executing $\cmdd$, then $d \notin \deps^*(c)$.
\end{invariant}

Satisfying \refinv{del} implies that if some command $\cmdd$ is in batch $\beta$, then $\beta$ also contains its transitive dependencies (\refline{execute:4} in \refalg{execute}).
Inside a batch, commands are ordered according to the partial order $\eorder$ (\refline{execute:5}).
Let $<$ be a canonical total order over $\cmdSet$.
Then, $\cmdc \eorder \cmdd$ holds iff
\begin{inparaenum}
\item $\cmdc \in \deps^*(\cmdd)$ and $\cmdd \notin \deps^*(\cmdc)$; or
\item $\cmdc \in \deps^*(\cmdd)$, $\cmdd \in \deps^*(\cmdc)$ and $\cmdc < \cmdd$.
\end{inparaenum}
Relation $\eorder$ defines the \emph{execution order} at a process.
If there is a one-way dependency between two commands, Leaderless SMR plays them in the order of their transitive dependencies;
otherwise the algorithm breaks the tie using the arbitrary order $<$.
This guarantees the following invariant.

\begin{invariant}
  \labinv{del2}  
  Consider two conflicting commands $\cmdc$ and $\cmdd$.
  If $p$ executes $\cmdc$ before $\cmdd$ in the same batch, then $\cmdc \in \deps^*(\cmdd)$.
\end{invariant}

\begin{algorithm}[t]
 
  \caption{Executing command $\cmdc$ -- code at process $p$}
  \labalg{execute}

  \begin{algorithmic}[1]

    \begin{action}{$\execute(\cmdc)$} \labline{execute:1}
      \precondition $\phase(\cmdc) = \sStable$ \labline{execute:2}
      \effect \textbf{let} $\beta$ be the largest subset of $\deps^*(\cmdc)$ satisfying $\forall \cmdd \in \beta \ldotp \phase(\cmdd) \equals \stable$ \labline{execute:4}
      \effect \textbf{forall} {$\cmdd \in \beta$ {\bf ordered by $\eorder$}} \labline{execute:5}
      \effect \hspace{1em} $\phase(\cmdd) \assign \sExecute$ \labline{execute:6}
    \end{action}
    
  \end{algorithmic}

\end{algorithm}


Generic and Leaderless SMR are strongly similar.
In fact, one may show that Generic SMR reduces to Leaderless SMR without requiring any message exchange.
This result is stated in \reftheo{reduction} below, and a proof appears in \refappendix{proofs}.
Let us observe that such a reduction does not hold between Classic and Generic SMR.
Indeed, computing a total order on commuting commands would require processes to communicate.

\begin{restatable}[]{thm}{reduction}
  \labtheo{reduction}
  Generic SMR reduces to Leaderless SMR.
\end{restatable}

However, \reftheo{reduction} offers an incomplete picture of how the two abstractions compare in practice.
Indeed, because the dependency graph might be cyclic, Leaderless SMR does not compute an ordering over conflicting commands.
Instead, such commands must simply observe one another (Consistency property).
This fundamental difference explains the absence of a leader in this class of SMR protocols, a feature that we capture in the next section.

\subsection{Deciding commands}
\label{leaderless:deciding}

In Leaderless SMR, processes have to agree on the dependencies of stable commands.
Thus, a subsequent refinement leads to consider a family of consensus objects $(\cons_{\cmdc})_{\cmdc \in \cmdSet}$ for that purpose.
For some command $\cmdc$, processes use $\cons_{\cmdc}$ to decide either the dependencies of $\cmdc$, or the special value ($\top$) signaling that the command is aborted.
This agreement is driven by the command \emph{coordinator} ($\coordc$), a process initially in charge of submitting the command to the replicated state machine.
In a run during which there is no failure and the failure detector behaves perfectly, that is a \emph{nice run}, only $\coordc$ calls $\cons_{\cmdc}$.

To create a valid proposal for $\cons_{\cmdc}$, $\coordc$ relies on the dependency discovery service ($\dds$).
This shared object offers a single operation $\announce(\cmdc)$ that returns a pair $(D,\flag)$, where $D \in 2^{\cmdSet} \union \{\top\}$ and $\flag \in \{0,1\}$ is a flag.
When the return value is in $2^{\cmdSet}$, the service suggests to commit the command.
Otherwise, the command should be aborted.
When the flag is set, the service indicates that a spontaneous agreement occurs.
In such a case, the coordinator can directly commit $\cmdc$ with the return value of the $\dds$ service and bypass $\cons_c$;
this is called \emph{a fast path}.
A \emph{recovery} occurs when command $c$ is annonunced at a process which is not $\coordc$.

The $\dds$ service ensures two safety properties.
First, if two conflicting commands are announced, they do not miss each other.
Second, when a command takes the fast path, processes agree on its committed dependencies.

More formally, assume that $\announce_p(\cmdc)$ and $\announce_q(\cmdcp)$ return respectively $(D,\flag)$ and $(D',\flag')$ with $D \in 2^{\cmdSet}$.
Then, the properties of the $\dds$ service are as follows.

\begin{description}
\item[Visibility:]
  If $\cmdc \conflict \cmdcp$ and $D' \in 2^{\cmdSet}$, then $\cmdc \in D'$ or $\cmdcp \in D$.
\item[Weak Agreement:]
  If $\cmdc = \cmdcp$ and $\flag = \true$, then $D' \in 2^{\cmdSet}$ and for every $\cmdd \in D \xor D'$, every invocation to $\announce_r(\cmdd)$ returns $ (\top,\any)$.
\end{description}

To illustrate these properties, consider that no command was announced so far.
In that case $(\emptySet,\true)$ is a valid response to $\announce(\cmdc)$.
If $\coordc$ is slow, then a subsequent invocation of $\announce(\cmdc)$ may either return $\emptySet$, or a non-empty set of dependencies $D$.
However in that case, because the fast path was taken by the coordinator, all the commands in $D$ must eventually abort.

\begin{algorithm}[t]
 
  \caption{Deciding a command $\cmdc$ -- code at process $p$}
  \labalg{deciding}

  \begin{algorithmic}[1]

    \begin{action}{$\submit(\cmdc)$} \labline{commit:1}
      \precondition $p = \coord(\cmdc) \lor \coord(\cmdc) \in \fd$ \labline{commit:2}
      \effect $(D,\flag) \assign \dds.\announce(\cmdc)$ \labline{commit:3}
      \effect \textbf{if} $\flag = \false$ \textbf{then} $D \assign \cons_{\cmdc}.propose(D)$ \labline{commit:4}
      \effect $\deps(\cmdc) \assign D$ \labline{commit:5}
      \effect $\sendTo{\cmdc,\deps(\cmdc)}{\procSet \setminus \{p\}}$ \labline{commit:6}
    \end{action}

    \\
    
    \begin{when}{$\recv(\cmdc,D)$} \labline{commit:7}
      \effect $\deps(\cmdc) \assign D$ \labline{commit:8}
    \end{when}
    
  \end{algorithmic}

\end{algorithm}

Based on the above decomposition of Leaderless SMR, \refalg{deciding} depicts an abstract protocol to decide a command.
This algorithm uses a family of consensus objects ($(\cons_{\cmdc})_{\cmdc \in \cmdSet}$), a dependency discovery service ($\dds$) and a failure detector ($\fd$) that returns a set of suspected processes.
To submit a command $\cmdc$, a process announces it then retrieves a set of dependencies.
This set is proposed to $\cons_{\cmdc}$ if the fast path was not taken (\refline{commit:4}).
The result of the slow or the fast path determines the value of the local mapping $\deps(\cmdc)$ to commit or abort command $\cmdc$.
Notice that such a step may also be taken when a process receives a message from one of its peers (\refline{commit:7}).

During a nice run, the system is failure-free and the failure detector service behaves perfectly.
As a consequence, only $\coordc$ may propose a value to $\cons_c$ and this value gets committed.
In our view, this feature is the \emph{key characteristic} of Leaderless SMR.

Below, we establish the correctness of \refalg{deciding}.
A proof appears in \refappendix{proofs}.

\begin{restatable}[]{thm}{deciding}
  \labtheo{deciding}
  \refalg{deciding} implements Leaderless SMR.
\end{restatable}

\subsection{Examples}
\labsection{leaderless:examples}

To illustrate the framework introduced in the previous sections, we now instantiate well-known Leaderless SMR protocols using it.

\paragraph{Rotating coordinator}
For starters, let us consider a rotating coordinator algorithm (e.g., \cite{spin}).
In this class of protocols, commands are ordered \emph{a priori} by some relation $\ll$.
Such an ordering is usually defined by timestamping commands at each coordinator and breaking ties with the process identities.
When $\coordc$ calls $\dds.\announce(c)$, the service returns a pair $(D,\false)$, where $D$ are all the commands prior to $\cmdc$ according to $\ll$.
Upon recovering a command, the $\dds$ service simply suggests to abort it.

\paragraph{Clock-RSM}
This protocol \cite{clockrsm} improves on the above schema by introducing a fast path.
It also uses physical clocks to speed-up the stabilization of committed commands.
Once a command is associated to a timestamp, its coordinator broadcasts this information to the other processes in the system.
When it receives such a message, a process waits until its local clock passes the command's timestamp to reply.
Once a majority of processes have replied, the \dds service informs the coordinator that the fast path was taken.

\paragraph{Mencius}
The above two protocols require a committed command to wait all its predecessors according to $\ll$.
Clock-RSM propagates in the background the physical clock of each process.
A command gets stable once the clocks of all the processes is higher than its timestamp.
Differently, Mencius \cite{mencius} aborts prior pending commands at the time the command is submitted.
In detail, $\announce(c)$ first approximates $D$ as all the commands prior to $\cmdc$ according to $\ll$.
Then, command $\cmdc$ is broadcast to all the processes in the system.
Upon receiving such a message, a process $q$ computes all the commands $\cmdd$ smaller than $\cmdc$ it is coordinating.
If $\cmdd$ is not already announced, $q$ stores that $\cmdd$ will be aborted.
Then, $q$ sends $\cmdd$ back to $\coordc$ that removes it from $D$.
The $\dds$ service returns $(D,f)$ with $f$ set to $\true$ if $\coordc$ received a message from everybody.
Upon recovering $\cmdc$, if the command was received the over-approximation based on $\ll$ is returned together with the flag $\false$.
In case $\cmdc$ is unknown, the $\dds$ service suggests to abort it.

\paragraph{EPaxos}
In \cite{epaxos}, the authors present Egalitarian Paxos (EPaxos), a family of efficient Leaderless SMR protocols.
For simplicity, we next consider the variation which does not involve sequence numbers.
To announce a command $\cmdc$, the coordinator broadcasts it to a quorum of processes.
Each process $p$ computes (and records) the set of commands $D_p$ conflicting with $\cmdc$ it has seen so far.
A call to $\announce(\cmdc)$ returns $(\union_p D_p, \flag)$, with $\flag$ set to $\true$ iff processes spontaneously agree on dependencies (i.e., for any $p,q$, $D_p=D_q$).
When a process in the initial quorum is slow or a recovery occurs, $\cmdc$ is broadcast to everybody.
The caller then awaits for a majority quorum to answer and returns $(D,\false)$ such that if at least $\frac{f+1}{2}$ processes answer the same set of conflicts for $\cmdc$, then $D$ is set to this value (with $n=2f+1$).
Alternatively, if at least one process knows $\cmdc$, the union of the response values is taken.
Otherwise, the $\dds$ service suggests to abort $c$.

\paragraph{Caesar}
To avoid cycles in the dependency graph, Caesar \cite{caesar} orders commands using logical timestamps.
Upon submitting a command $\cmdc$, the coordinator timestamps it with its logical clock then it executes a broadcast.
As with EPaxos, when it receives $\cmdc$ a process $p$ computes the conflicting commands $D_p$ received so far.
Then, it awaits until there is no conflicting command $\cmdd$ with a higher timestamp than $\cmdc$ such that $\cmdc \notin \deps(\cmdd)$.
If such a command exists, $p$ replies to the coordinator that the fast path cannot be taken.
The $\dds$ service returns $(\union_p D_p, \flag)$, where $\flag=\true$ iff no process disables the fast path.

The above examples show that multiple implementations are possible for Leaderless SMR.
In the next section, we introduce several properties of interest to characterize them.

\subsection{Core properties}
\labsection{leaderless:properties}

State machine replication helps to mask failures and asynchrony in a distributed system.
As a consequence, a first property of interest is the largest number of failures (parameter $f$) tolerated by a protocol.
After $f$ failures, the protocol may not guarantee any progress.%
\footnote{
  When $f$ failures occur, the system configuration must change to tolerate subsequent ones.
  If data is persisted (as in Paxos \cite{paxos}), the protocol simply stops when more than $f$ failures occurs and awaits that faulty processes are back online.
}

\begin{description}
\item[(\emph{Reliability})]
  In every run, if there are at most $f$ failures, every submitted command gets eventually decided at every correct process.
\end{description}

Leaderless SMR protocols exploit the absence of contention on the replicated service to boost performance.
In particular, some protocols are able to execute a command after a single round-trip, which is clearly optimal \cite{consensusBounds}.
To ensure this property, the fast path is taken when there is no concurrent conflicting command.
Moreover, the command stabilizes right away, requiring that the $\dds$ service returns only submitted commands.

\begin{description}
\item[(\emph{Optimal Latency})]
  During a nice run, every call to $\announce(c)$ returns a tuple $(D,\flag)$ after two message delays such that
  \begin{inparaenum}
  \item if there is no concurrent conflicting command to $c$, then $\flag$ is set to $\true$,
  \item $D \in 2^{\cmdSet}$, and
  \item for every $d \in D$, $d$ was announced before.
  \end{inparaenum}
\end{description}

The replicas that participate to the fast path vary from one protocol to another.
Mencius use all the processes.
On the contrary, EPaxos solely contact $\floor{\frac{3n}{4}}$ of them (or equivalently, $f+\frac{f+1}{2}$ when $n=2f+1$).
For some command $\cmdc$, a \emph{fast path quorum} for $\cmdc$ is any set of $n-F$ replicas that includes the coordinator of $\cmdc$. 
Such a set is denoted $\fquorums(c)$ and formally defined as $\{ Q \mid Q \subseteq \Pi \land coord(c) \in Q \: \land |Q| \geq n - F \}$.
A protocol has the \emph{Load Balancing} property when it may freely choose fast path quorum to make progress.

\begin{description}
\item[(\emph{Load Balancing})]
  During a nice run, any fast path quorum in $\fquorums(c)$ can be used to announce a command $\cmdc$.
\end{description}

The previous properties are fomally defined in \refappendix{roll:prop}.
\reftab{properties} indicates how they are implemented by well-known leaderless protocols.
The columns 'Reliability' and 'Load Balancing' detail respectively the maximum number of failures tolerated by the protocol and the size of the fast path quorum.
Notice that by CAP \cite{cap}, we have $F,f \leq \floor{\frac{n-1}{2}}$ when the protocol matches all of the properties.
%
\reftab{properties} also mentions the optimality of each protocol with respect to the ROLL theorem.
This theorem is stated in the next section and establishes a trade-off between fault-tolerance and performance in Leaderless SMR.

\begin{table}[t]
  \centering
  \begin{tabular}{c|c c c|c} 
    & \multicolumn{4}{c}{\textbf{\emph{Properties}}} \\ [0.5ex]
    \textbf{\emph{Protocols}} & \makecell{Load Balancing\\ ($n-F$)} & \makecell{Reliability\\ ($f$)} & \makecell{Optimal\\ Latency} & ROLL-optimal \\ 
    \hline
    \hline
    Rotating coord.& 0 & \MIN & \NO & \NO \\
    Clock-RSM \cite{clockrsm} & $n$ & \MIN & \NO & \NO \\\
    Mencius \cite{mencius} & $n$ & \MIN & \YES & \NO \\
    Caesar \cite{caesar} & $\ceil{\frac{3n}{4}}$ & \MIN & \YES & \NO \\
    EPaxos \cite{epaxos} & \LMAJ & \MIN & \YES & if $n=2f+1$ \\
    Alvin \cite{alvin} & \LMAJ & \MIN & \YES & if $n=2f+1$ \\
    Atlas \cite{atlas} & $\floor{\frac{n}{2}}+f$ & any & \YES & \makecell{if $n \in 2\naturalSet \union \{3\} \land f=1$}
  \end{tabular}
  \vspace{1em}
  \caption{
    \labtab{properties}%
    The properties of several leaderless SMR protocols
    --
    \MIN stands for a minority of replicas ($\floor{\frac{n-1}{2}}$),
    \MAJ a majority ($\ceil{\frac{n+1}{2}}$), and
    \LMAJ a large majority ($\floor{\frac{3n}{4}}$).
  }
  \vspace{-1.5em}
\end{table}

\section{The ROLL theorem}
\labsection{roll}

Reliability, Optimal Latency and Load Balancing are called collectively the ROLL properties.
These properties introduce the parameters $f$ and $F$ as key characteristics of a Leaderless SMR protocol.
Parameter $f$ translates the reliability of the protocol, stating that progress is guaranteed only if less than $f$ processes crash.
Parameter $F$ captures its scalability since, any quorum of $n - F$ processes may be used to order a command.
An ideal protocol should strive to minimize $n-F$ while maximizing $f$.

Unfortunately, we show that there is no free-lunch and that an optimization choice must be made.
The ROLL theorem below establishes that $2F+f-1 \leq n$ must hold.
This inequality captures that every protocol must trade scalability for fault-tolerance.
EPaxos \cite{epaxos} and Atlas \cite{atlas} illustrate the two ends of the spectrum of solutions (see \reftab{properties}).
EPaxos supports that any minority of processes may fail, but requires large quorums.
Atlas typically uses small fast path quorums ($\floor{\frac{n}{2}}+f$), but exactly handles at most $f$ failures.

Below, we state the ROLL theorem and provide a sketch of proof illustrated in \reffigure{roll}.
A formal treatment appears in \refappendix{roll}.


\begin{figure}[tbp]
  \centering
  \captionsetup{justification=centering} 
  \begin{subfigure}[t]{0.2\textwidth}
    \centering
    \begin{tikzpicture}[
    scale=0.72,
    every node/.style={transform shape},
    auto,
    shifttl/.style={shift={(-\shiftpoints,\shiftpoints)}},
    shifttr/.style={shift={(\shiftpoints,\shiftpoints)}},
    shiftbl/.style={shift={(-\shiftpoints,-\shiftpoints)}},
    shiftbr/.style={shift={(\shiftpoints,-\shiftpoints)}},
  ]
  \large

  \draw[Black] (1,-1.7) rectangle (1.47,-3);

  \draw[Black] (1,-3.1) rectangle (1.47,-3.7);
  \draw[Black] (1,-3.8) rectangle (1.47,-4.4);
  \draw[Black] (1,-4.5) rectangle (1.47,-5.1);

  \node at (1.25,-2.3) () {\color{Black}$P_1$};
  \node at (1.25,-3.4) () {\color{Black}$p_1$};
  \node at (1.25,-4.1) () {\color{Black}$Q^*$};
  \node at (1.25,-4.8) () {\color{Black}$p_2$};

  \draw[Black] (1,-5.2) rectangle (1.47,-6.4);
  \node at (1.25,-5.7) () {\color{Black}$P_2$};
  \node at (2.1,-1.4) () {\color{Blue}$Q_1$};
  \draw[Blue, pattern=north west lines, pattern color = Blue] (1.8,-1.7) rectangle (2.27,-3);
  \draw[Blue, pattern=north west lines, pattern color = Blue] (1.8,-3.1) rectangle (2.27,-3.7);
  \draw[Blue, pattern=north west lines, pattern color = Blue] (1.8,-3.8) rectangle (2.27,-4.4);

  \node at (2.9,-3.5) () {\color{Red}$Q_2$};
  \draw[Red, pattern=north west lines, pattern color = Red ] (2.6,-3.8) rectangle (3.07,-4.4);
  \draw[Red, pattern=north west lines, pattern color = Red ] (2.6,-4.5) rectangle (3.07,-5.1);
  \draw[Red, pattern=north west lines, pattern color = Red ] (2.6,-5.2) rectangle (3.07,-6.4);

  \draw [decorate,decoration={brace,amplitude=5pt,mirror}] (0.9,-3.1) -- (0.9,-5.1);
  \node at (0.3,-4.1) () {\color{Black}$Q$};

\end{tikzpicture}
    \caption{
      \labfigure{roll:a} 
      Quorums in use.
    }
    \label{fig:roll1}
  \end{subfigure}
  \hfill%
  \begin{subfigure}[t]{0.37\textwidth}
    \centering
    \tikzstyle{messageA} = [draw, -latex',Blue, shorten <=0.2em]
\tikzstyle{messageB} = [draw, -latex',Red, shorten <=0.2em]
\tikzstyle{lifeline} = [draw,Gray!60]

\begin{tikzpicture}[
    scale=0.72,
    every node/.style={transform shape},
    auto,
    shifttl/.style={shift={(-\shiftpoints,\shiftpoints)}},
    shifttr/.style={shift={(\shiftpoints,\shiftpoints)}},
    shiftbl/.style={shift={(-\shiftpoints,-\shiftpoints)}},
    shiftbr/.style={shift={(\shiftpoints,-\shiftpoints)}},
  ]
  \large

  \node [] (p1) {$p_1$};
  \node [below of = p1] (p2) {$P_1$};
  \node [below of = p2] (p3) {$Q^*$};
  \node [below of = p3] (p4) {$P_2$};
  \node [below of = p4] (p5) {$p_2$};

  \path [lifeline] (p1) -- +(4,0);
  \path [lifeline] (p2) -- +(6.5,0);
  \path [lifeline] (p3) -- +(4,0);
  \path [lifeline] (p4) -- +(6.5,0);
  \path [lifeline] (p5) -- +(4,0);   

  \node at (4.1, 0) [scale=1.8] () {\color{Black}$\times$};
  \node at (4.1,-2) [scale=1.8] () {\color{Black}$\times$};
  \node at (4.1,-4) [scale=1.8] () {\color{Black}$\times$};
  
  \path [lifeline] (0.5,0.5) -- (0.5,-4.5);   
  \path [lifeline] (2,0.5) -- (2,-4.5);   
  \path [lifeline] (3.5,0.5) -- (3.5,-4.5);   
  \path [lifeline] (5,0.5) -- (5,-4.5);   
  \path [lifeline] (6.5,0.5) -- (6.5,-4.5);   

  \node at (0.75,0.2) () {\color{Black}$\cmd{1}$};
  \path[messageA] (0.5,0) -- (2,-1);
  \path[messageA] (2,-1) -- (3.5,0);
  \path[messageA,dashed] (0.45,0.1) -- (1.5,-1.6);

  \node at (0.75,-4.25) () {\color{Black}$\cmd{2}$};
  \path[messageB] (0.42,-4.1) -- (2,-3);
  \path[messageB] (2,-3) -- (3.5,-4);
  \path[messageB,dashed] (0.42,-4.1) -- (1.5,-2.4);

  \node at (1.3,1) () {\color{Black}$1$};
  \node at (2.8,1) () {\color{Black}$2$};
  \node at (4.3,1) () {\color{Black}$\cdots$};
  \node at (5.8,1) () {\color{Black}$k$};

  \filldraw[black] (6,-1) circle (1pt) node[anchor=west] {};
  \node at (6.2,0) {\color{Black}{$\deps(\cmd{1}) \neq \bot$}};
  \node at (6.2,-0.5) {\color{Black}{$\deps(\cmd{2}) \neq \bot$}};

\end{tikzpicture}
    \caption{
      \labfigure{roll:b} 
      Run $\run_3$.
    }
  \end{subfigure}
  \hfill%
  \begin{subfigure}[t]{0.37\textwidth}
    \centering
    \tikzstyle{messageA} = [draw, -latex',Blue, shorten <=0.2em]
\tikzstyle{messageB} = [draw, -latex',Red, shorten <=0.2em]
\tikzstyle{lifeline} = [draw,Gray!60]

\begin{tikzpicture}[
    scale=0.72,
    every node/.style={transform shape},
    auto,
    shifttl/.style={shift={(-\shiftpoints,\shiftpoints)}},
    shifttr/.style={shift={(\shiftpoints,\shiftpoints)}},
    shiftbl/.style={shift={(-\shiftpoints,-\shiftpoints)}},
    shiftbr/.style={shift={(\shiftpoints,-\shiftpoints)}},
  ]
  \large

  \node [] (p1) {$p_1$};
  \node [below of = p1] (p2) {$P_1$};
  \node [below of = p2] (p3) {$Q^*$};
  \node [below of = p3] (p4) {$P_2$};
  \node [below of = p4] (p5) {$p_2$};

  \path [lifeline] (p1) -- +(4,0);
  \path [lifeline] (p2) -- +(6.5,0);
  \path [lifeline] (p3) -- +(4,0);
  \path [lifeline] (p4) -- +(6.5,0);
  \path [lifeline] (p5) -- +(4,0);   

  \node at (4.1, 0) [scale=1.8] () {\color{Black}$\times$};
  \node at (4.1,-2) [scale=1.8] () {\color{Black}$\times$};
  \node at (4.1,-4) [scale=1.8] () {\color{Black}$\times$};
  
  \path [lifeline] (0.5,0.5) -- (0.5,-4.5);   
  \path [lifeline] (2,0.5) -- (2,-4.5);   
  \path [lifeline] (3.5,0.5) -- (3.5,-4.5);   
  \path [lifeline] (5,0.5) -- (5,-4.5);   
  \path [lifeline] (6.5,0.5) -- (6.5,-4.5);   

  \node at (0.75,0.2) () {\color{Black}$\cmd{1}$};
  \path[messageA] (0.45,0.1) -- (2,-1);
  \path[messageA] (2,-1) -- (3.5,0);
  \path[messageA] (0.45,0.1) -- (2,-2);
  \path[messageA] (2,-2) -- (3.5,0);

  \node at (0.75,-4.25) () {\color{Black}$\cmd{2}$};
  \path[messageB] (0.42,-4.1) -- (2,-3);
  \path[messageB] (2,-3) -- (3.5,-4);
  \path[messageB,dashed] (0.42,-4.1) -- (1.5,-2.4);

  \node at (1.3,1) () {\color{Black}$1$};
  \node at (2.8,1) () {\color{Black}$2$};
  \node at (4.3,1) () {\color{Black}$\cdots$};
  \node at (5.8,1) () {\color{Black}$k$};

  \node at (3.7,0.4) {\color{Black}{$\deps(\cmd{1}) = \emptySet$}};
  \filldraw[black] (3.7,0) circle (1pt) node[anchor=west] {};

  \filldraw[black] (6,-1) circle (1pt) node[anchor=west] {};
  \node at (6.2,-0.6) {\color{Black}{$\deps(\cmd{1}) = \emptySet$}};

\end{tikzpicture}
    \vspace{-1.3em} 
    \caption{
      \labfigure{roll:c}
      Run $\run_4$.
    }
  \end{subfigure}
  \caption{
    \labfigure{roll}
    Illustration of \reftheo{roll} -- \emph{slow messages are omitted}.
  }
  \vspace{-1em}
\end{figure}

\begin{restatable}[ROLL]{thm}{roll}
  \labtheo{roll}
  Consider an SMR protocol that satisfies the ROLL properties.
  Then, it is true that $2F + f - 1 \leq n$.
\end{restatable}

\begin{sproof}
  Our proof goes by contradiction, using a round-based reasoning.
  Let us assume a protocol $\mathcal{P}$ that satisfies all the ROLL properties with $2F + f - 1 > n$.
  Then, choose two non-commuting commands $\cmd{1}$ and $\cmd{2}$ in $\cmdSet$.

  As depicted in \reffigure{roll:a}, the distributed system is partitioned into three sets:
  $P_1$ and $P_2$ are two disjoints sets of $F-1$ processes, and the remaining $n - 2(F-1)$ processes form $Q$.
  The CAP impossibility result \cite{cap} tells us that $2F < n$.
  As a consequence, there exist at least two distinct processes $p_1$ and $p_2$ in $Q$.
  We define $Q_1$ and $Q_2$ as respectively $P_1 \union Q \setminus \{p_2\}$ and $P_2 \union Q \setminus \{p_1\}$.
  The set $Q^{*}$ equals $Q \setminus \{p_1,p_2\}$.

  Let $\run_1$ be a nice run that starts from the submission of $\cmd{1}$ by process $p_1$ during which only $Q_1$ take steps.
  Since $Q_1$ contains $n-F$ processes such a run exists by the Load Balancing property of $\mathcal{P}$.
  By Optimal Latency, this run lasts two rounds and $\deps(\cmd{1})$ is set to $\emptySet$ at process $p_1$.
  Similarly, we may define $\run_2$ a run in which $p_2$ announces command $\cmd{2}$ and in which only the processes in $Q_2$ participate.

  Then, consider a run $\run_3$ in which $p_1$ and $p_2$ submit concurrently commands $\cmd{1}$ and $\cmd{2}$.
  This run is illustrated in \reffigure{roll:b}.
  At the end of the first round, the processes in $P_1$ (respectively, $P_2$) receive the same messages as in $\run_1$ (resp., $\run_2$).
  At the start of the second round, they reply to respectively $p_1$ and $p_2$ as in $\run_1$ and $\run_2$.
  All the other messages sent in the first two rounds are arbitrarily slow.
  The processes in $Q$ crash at the end of the second round.
  By Reliability and as $f \geq \cardinalOf{Q}$, the commands $\cmd{1}$ and $\cmd{2}$ are stable in $\run_3$.  
  Let $k$ be the first round at which the two commands are stable at some process $p \in P_1 \union P_2$.

  We now build an admissible run $\run_4$ of $\mathcal{P}$ as follows.
  The failure pattern and failure detector history are the same as in $\run_3$.
  Commands $\cmd{1}$ and $\cmd{2}$ are submitted concurrently at the start of $\run_4$, as in $\run_3$.
  In the first two rounds, $P_1$ receives the same messages as in $\run_1$ while $P_2$ receives the same messages as in $\run_3$.
  The other messages exchanged during the first two rounds are arbitrarily slow.
  \reffigure{roll:c} depicts run $\run_4$.
  
  Observe that the following claims about $\run_4$ are true.
  First, (C1) for $p_1$, $\run_4$ is indistinguishable to $\run_1$ up to round 2.
  Moreover, (C2) for the processes in $(P_1 \union P_2)$, $\run_4$ is indistinguishable to $\run_3$ up to round $k$.
  From (C1), $\cmd{1}$ is stable at $p_1$ with $\deps(\cmd{1}) = \emptySet$.
  Claim (C2) implies that both $\cmd{1}$ and $\cmd{2}$ are stable at $p$ when round $k$ is reached.
  By the stability property of Leaderless SMR, process $p$ and $p_1$ decide the same dependencies for $\cmd{1}$, i.e., $deps(\cmd{1}) = \emptySet$.

  A symmetric argument can be made using run $\run_2$ and a run $\run_5$, showing that $p$ decides $deps(\cmd{2}) = \emptySet$ in $\run_3$.
  It follows that in $\run_3$, an empty set of dependencies is decided for both commands at process $p$;
  a contradiction to the Consistency property.
\end{sproof}

\reftheo{roll} captures an inherent trade-off between performance and reliability for ROLL protocols.
For instance, tolerating a minority of crashes, requires accessing at least $\floor{\frac{3n}{4}}$ processes.
This is the setting under which EPaxos operates.
On the other hand, if the protocol uses a plain majority quorum in the fast path, it tolerates at most one failure.

\subsection{Optimality}
\labsection{roll:optimality}

A protocol is \emph{ROLL-optimal} when the parameters $F$ and $f$ cannot be improved according to \reftheo{roll}.
In other words, they belong to the skyline of solutions \cite{skyline}.
As an example, when the system consists of 5 processes, there is a single such tuple $(F,f)=(2,2)$.
With $n=7$, there are two tuples in the skyline, $(2,3)$ and $(3,2)$.
The first one is attained by EPaxos, while Atlas offers the almost optimal solution $(3,1)$ (see \reftab{properties}).

For each protocol, \reftab{properties} lists the conditions under which ROLL-optimality is attained.
EPaxos and Alvin are both optimal under the assumption that $n=2f+1$.
Atlas adjusts the fast path quorums to the value of $f$, requiring $\floor{\frac{n}{2}}+f$ processes to participate.
This is optimal when $f=1$ and either $n$ is even or equals to $3$.
In the general case, the protocol is within $O(f)$ of the optimal value.
As it uses classical Fast Paxos quorums, Caesar is not ROLL-optimal.
This is also the case of protocols that contact all of the replicas to make progress, such as Mencius and Clock-RSM.
To the best of our knowledge, no protocol is optimal in the general case.


In the next section, we show that ROLL-optimality has a price.
More precisely, we establish that by being optimal, a protocol may create an arbitrarily long chain of commands, even during a nice run.
This chaining effect may affect adversely the performance of the protocol.
We discuss measures of mitigation in \refsection{discussion}.

\section{Chaining effect}
\labsection{chaining}


This section shows that a chaining effect may affect ROLL-optimal protocols.
It occurs when the chain of transitive dependencies of a command keeps growing after it gets committed.
This implies that the committed command takes time to stabilize, thus delaying its execution and increasing the protocol latency.

At first glance, one could think that this situation arises from the asynchrony of the distributed system.
As illustrated in \reffigure{latency}, this is not the case.
We establish that such an effect may occur during ``almost'' synchronous runs.

The remaining of this section is split as follows.
First, we define the notion of chain, that is a dependency-related set of commands.
A chain is live when its last command is not stable.
To measure how asynchronous a nice run is, we then introduce the principle of $k$-asynchrony.
A run is $\async{k}$ when some message is concurrent to both the first and last message of a sequence of $k$ causally-related messages.

At core, our result shows how to inductively add a new link to a live chain during an appropriate $\async{2}$ run of a ROLL-optimal protocol.

\subsection{Notion of chain}
\labsection{chaining:chain}

A chain is a sequence of commands $\cmd{i}\ldots\cmd{n}$ such that for any two consecutive commands $(\cmdi,\cmdin)$ in the chain, $\cmdin \in \deps(\cmdi)$ at some process.
Two consecutive commands $(\cmdc, \cmdd)$ in a chain form a \emph{link}.
For instance, in the dependency graph $\graphd$ (see \reffigure{dep2}), $\cmdc \cmdb \cmda$ is a chain.

We shall say that a chain is \emph{live} when its last command is not stable yet (at any of the processes).
In $\graphd$, this is the case of the chain $\cmdd \cmdb \cmda$, since command $\cmdd$ is still pending ($\deps(\cmdd)=\bot$).
When a chain is live, the last command in the chain has to wait to ensure a sound execution order across processes.
This increases the protocol latency.

\subsection{A measure of asynchrony}
\labsection{chaining:asynchrony}

In a synchronous system \cite{lynch}, processes executes rounds in lock-step.
During a round, the messages sent at the beginning are received at the end (provided there is no failure).
On the other hand, a partially synchronous system may delay messages for an arbitrary amount of time.
In this model, we propose to measure asynchrony by looking at the overlaps between the exchanges of messages.
The larger the overlap is, the more asynchronous is the run.

To illustrate this idea, consider the run depicted in \reffigure{chaining}.
During this run, a {\color{Crimson}red} message is sent from $p_5$ to $p_4$ (bottom left corner of the figure).
In the same amount of time $p_1$ sends a {\color{Blue}blue} message to $p_2$ which is followed by a {\color{OliveGreen}green} message to $p_4$.
To characterize such an asynchrony, we shall say that the run is $\async{2}$.
This notion is precisely defined below.

\begin{definition}[Path]
  A sequence of event $\rho = \send_p(m_1) \recv_q(m_1) \send_q(m_2) \ldots \recv_t(m_{k \geq 1})$ in a run is called a \emph{path}.
  We note $\rho[i]$ the $i$-th message in the path.
  The number of messages in the path, or its \emph{size}, is denoted $\cardinalOf{\rho}$.
\end{definition}

\begin{definition}[Overlapping]
  Two messages $m$ and $m'$ are overlapping when their respective events are concurrent.%
  \footnote{
    That is, neither $\recv(m)$ precedes $\send(m')$, nor $\recv(m')$ precedes $\send(m)$ in real-time.
  }
  By extension, a message $m$ overlaps with a path $\rho$ when it overlaps with both $\rho[1]$ and $\rho[\cardinalOf{\rho}]$.
\end{definition}

\begin{definition}[$k$-asynchrony]
  A run $\lambda$ is \async{k} when for every message $m$, if $m$ overlaps with a path $\rho$ then $\cardinalOf{\rho} \leq k$. 
\end{definition}


\subsection{Result statement}
\labsection{chaining:result}

The theorem below establishes that a ROLL-optimal protocol may create a live chain of arbitrary size during a \async{2} nice run.
The full proof appears in \refappendix{chaining}.

\tikzstyle{message} = [draw, -latex',shorten <=0.2em]
\tikzstyle{messageA} = [draw, -latex',Blue, shorten <=0.2em]
\tikzstyle{messageB} = [draw, -latex',Crimson, shorten <=0.2em]
\tikzstyle{messageC} = [draw, -latex',OliveGreen, shorten <=0.2em]
\tikzstyle{messageD} = [draw, -latex',Purple, shorten <=0.2em]
\tikzstyle{messageE} = [draw, -latex',Orange, shorten <=0.2em]
\tikzstyle{messageF} = [draw, -latex',Brown, shorten <=0.2em]
\tikzstyle{messageG} = [draw, -latex',Red, shorten <=0.2em]
\tikzstyle{lifeline} = [draw,Gray!60]
\tikzstyle{call} = [thick]
\newcommand{\sstate}[2]{\ensuremath{{\color{#1}{#2}}}}
\newcommand{\sstateOne}{\ensuremath{\sstate{Blue}{\bigstar}}}
\newcommand{\sstateTwo}{\ensuremath{\sstate{Crimson}{\blacklozenge}}}
\newcommand{\sstateThree}{\ensuremath{\sstate{Purple}{\blacktriangle}}}
\newcommand{\shiftpoints}{2pt}

\begin{figure}[t]
  \centering
  \begin{tikzpicture}[
      scale=0.75,
      every node/.style={transform shape},
      auto,
      shifttl/.style={shift={(-\shiftpoints,\shiftpoints)}},
      shifttr/.style={shift={(\shiftpoints,\shiftpoints)}},
      shiftbl/.style={shift={(-\shiftpoints,-\shiftpoints)}},
      shiftbr/.style={shift={(\shiftpoints,-\shiftpoints)}},
    ]
    \large

    \node [] (p1) {$p_1$};
    \node [below of = p1] (p2) {$p_2$};
    \node [below of = p2] (p3) {$p_3$};
    \node [below of = p3] (p4) {$p_4$};
    \node [below of = p4] (p5) {$p_5$};

    \path [lifeline] (p1) -- +(16,0);
    \path [lifeline] (p2) -- +(16,0);
    \path [lifeline] (p3) -- +(16,0);
    \path [lifeline] (p4) -- +(16,0);
    \path [lifeline] (p5) -- +(16,0);   

    \path[messageA] (1.2,0) -- (2.2,-1);
    \path[messageA] (1.2,0) -- (2.2,-2);
    \path[messageA] (2.2,-1) -- (3.2,0);
    \path[messageA] (2.2,-2) -- (3.2,0);

    \draw[Black,thick,dotted] (1,.5) -- (1.7,.5) -- (1.7,-.5) -- (1,-.5) -- (1,.5);
    \draw[Black,thick,dotted] (1.85,.5) -- (2.55,.5) -- (2.55,-2.2) -- (1.85,-2.2) -- (1.85,.5);
    \draw[Black,thick,dotted] (2.7,.5) -- (3.4,.5) -- (3.4,-.5) -- (2.7,-.5) -- (2.7,.5);

    \node at (1.35,.8) () {\color{Black}$S_1$};
    \node at (2.2,.8) () {\color{Black}$M_1$};
    \node at (3.05,.8) () {\color{Black}$R_1$};
    
    \path[messageB] (.5,-4) -- (1.5,-2);
    \path[messageB] (.5,-4) -- (4.5,-3);
    \path[messageB] (1.5,-2) -- (2.5,-4);
    \path[messageB] (4.5,-3) -- (5.5,-4);

    \path[messageC] (3,-1) -- (4,-3);
    \path[messageC] (4,-3) -- (5,-1);
    \path[messageC] (3,-1) -- (7,0);
    \path[messageC] (7,0) -- (8,-1);

    \path[messageD] (5,-2) -- (6,0);
    \path[messageD] (6,0) -- (7,-2);
    \path[messageD] (5,-2) -- (9,-4);
    \path[messageD] (9,-4) -- (10,-2);

    \path[messageE] (6,-3) -- (7,-4);
    \path[messageE] (7,-4) -- (8,-3);
    \path[messageE] (6,-3) -- (10,-1);
    \path[messageE] (10,-1) -- (11,-3);

    \path[messageF] (8.5,0) -- (9.5,-1);
    \path[messageF] (9.5,-1) -- (10.5,0);
    \path[messageF] (8.5,0) -- (12.5,-2);
    \path[messageF] (12.5,-2) -- (13.5,0);

    \draw (10.5,.5) .. controls (11.7,-1) and (11.7,-3) .. (10,-4.5);
    \node at (10.5,.8) () {\color{Black}$\Gamma_7$};
    
    \draw[Black,thick,dotted] (12.2,.5) -- (13.7,.5) -- (13.7,-2.2) -- (12.2,-2.2) -- (12.2,.5);
    \node at (13.05,.8) () {\color{Black}$\Gamma_7'$};

    \path[messageG] (10.8,-4) -- (11.8,-2);
    \path[messageG] (10.8,-4) -- (14.8,-3);
    \path[messageG,dashed] (11.8,-2) -- (12.8,-4);
    \path[messageG,dashed] (14.8,-3) -- (15.8,-4);    
  \end{tikzpicture}
  \caption{
    \labfigure{chaining}
    \reftheo{chaining} for $n=5$ and $k=7$.
    The chain
    $
    {\color{Red}\cmd{7}}
    {\color{Brown}\cmd{6}}
    {\color{Orange}\cmd{5}}
    {\color{Purple}\cmd{4}}
    {\color{OliveGreen}\cmd{3}}
    {\color{Crimson}\cmd{2}}
    {\color{Blue}\cmd{1}}
    $
    is formed in $\sigma_{7}$.
    Illustrating the steps {\color{Black}$S_1$}, {\color{Black}$M_1$} and {\color{Black}$R_1$} for command ${\color{Blue}\cmd{1}}$,
    the prefix $\Gamma_7$ of $\sigma_7$, and the steps $\Gamma_7'$. 
  }
  \vspace{-1.5em}
\end{figure}

\begin{restatable}[Chaining Effect]{thm}{chaining}
  \labtheo{chaining}
  Assume a ROLL-optimal protocol $\mathcal{P}$.
  For any $k>0$, there exists a $\async{2}$ nice run of $\mathcal{P}$ containing a live chain of size $k$.
\end{restatable}

\begin{sproof}
  The theorem is proved by adding inductively a new link to a live chain of commands created during a nice run.
  It is illustrated in \reffigure{chaining} for a system of five processes when $k=7$.
  
  The proof is based on the following two key observations about ROLL-optimal protocols.
  First, during a nice run, the coordinator of a command never rotates.
  As a consequence, the return value of the \dds service at the coordinator is always the stable value of $\deps(c)$.
  Second, as the protocol satisfies the ROLL properties, a call to $\announce(c)$ consists of sending a set of requests to the fast path quorum and receiving a set of replies.
  As a consequence, its execution can be split into the steps $S_c M_c R_c$, where
  \begin{inparaenumorig}
  \item[($S_c$)] are the steps taken from announcing $c$ to the sending of the last request at the coordinator;
  \item[($R_c$)] are the steps taken by $\coord(c)$ after receiving the first replies until the announcement returns; and
  \item[($M_c$)] are the steps taken during the announcement of $c$ which are neither in $S_c$, nor in $R_c$.
  \end{inparaenumorig}%
  By Optimal Latency, this sequence of steps do not create pending messages. 
  As an illustration, the steps $S_1$, $M_1$ and $R_1$ taken to announce command ${\color{Blue}\cmd{1}}$ are depicted in \reffigure{chaining}.
  
  Leveraging the above two observations, the result is built inductively using a family of $k$ distinct commands $(\cmdi)_{i \in [1,k]}$.
  Each command is associated with a nice run ($\sigma_i$), a fast path quorum ($Q_i$), a subset of $f-1$ processes ($P_i$), and a process ($q_i$).
  
  Given a sequence of steps $\lambda$ and a set of processes $Q$, let us note $\lambda|Q$ the sub-sequence of steps by $Q$ in $\lambda$.
  We establish that at rank $i>0$ the following property $\mathfrak{P}(i)$ holds:
  There exists a $\async{2}$ run $\sigma_{i}$ of the form $\Gamma_i S_i (M_i|P_i) \Gamma'_i (M_i| Q_i \setminus P_i) R_i$ such that
  \begin{inparaenumorig}[\em(1)]
  \item $\proc(\Gamma_i') \inter Q_i = P_i$;
  \item every path in $\Gamma_i'$ is as most of size one;
  \item no message is pending in $\sigma_i$; and
  \item $\sigma_{i}$ contains a chain $\cmdi \cmdip \cdots \cmd{1}$.
  \end{inparaenumorig}%
  \reffigure{chaining} depicts the run $\sigma_{7}$, its prefix $\Gamma_7$ and the steps $\Gamma_7'$.

  Starting from $\mathfrak{P}(i)$, we establish $\mathfrak{P}(i+1)$ as follows.
  First we show that $\sigma_{i+1}$ as $\Gamma_{i+1} S_{i+1} (M_{i+1} | P_{i+1}) \Gamma'_{i+1} (M_{i+1} | Q_{i+1} \setminus P_{i+1}) R_{i+1}$, where 
  \begin{inparaenumorig}[]
  \item $\Gamma_{i+1} = \Gamma_i S_i (M_i | P_i) \Gamma'_i$, and
  \item $\Gamma'_{i+1} =  (M_i | Q_i \setminus P_i) R_i$
  \end{inparaenumorig}%
  is a nice run.

  At rank $i+1$, item (1) is proved with appropriate definitions of the quorums ($Q_i$ and $Q_{i+1}$), and the sub-quorum ($P_{i}$).
  For instance, in \reffigure{chaining}, the command $c_1$ and $c_2$ have respectively $\{p_1,p_2,p_3\}$ and $\{p_3,p_4,p_5\}$ for fast path quorums.
  The sub-quorum $P_2$ is set to the intersection of $Q_1$ and $Q_2$, that is $\{p_3\}$.
  Item (2) follows from the definition of $\Gamma_{i+1}'$.
  The Load-Balancing property implies that (3) holds.
  A case analysis can then show that $\sigma_{i+1}$ is \async{2}.
  It relies on the fact that the $(SMR)_{i+1}$ steps create no pending message and the induction property $\mathfrak{P}(i)$. 
  
  To prove that a new link was added, we show that $\sigma_{i+1}$ is indistinguishable to $\coordi$ to a run in which $\cmdin$ gets committed while missing $\cmdi$.
  Going back to \reffigure{chaining}, observe that the coordinator of ${\color{Brown}\cmd{6}}$ does not know that the replies of $p_2$ for command ${\color{Orange}\cmd{5}}$ causally precedes the replies of $p_4$ to ${\color{Red}\cmd{7}}$.
  As a consequence, it must add $\cmd{7}$ to the return value of $DDS.\announce({\color{Brown}\cmd{6}})$.

  Finally, to obtain a live chain of size $k$, it suffices to consider the prefix of $\sigma_{i+1}$ which does not contain the replies of the fast path quorum.
  In \reffigure{chaining}, this corresponds to omitting the dashed messages that contain the reply to the announcement of ${\color{Red}\cmd{7}}$
\end{sproof}

\section{Discussion}
\labsection{discussion}

Leaderless SMR offers appealing properties with respect to leader-driven approaches.
Protocols are faster in the best case, suffer from no downtime when the leader fails, and distribute the load among participants.
For instance, \reffigure{latency} shows that EPaxos is strictly better than Paxos when there is no conflict.
However, the latency of a command is strongly related to its dependencies in this family of SMR protocols.
Going back to \reffigure{latency}, the bivariate correlation between the latency of a command and the size of the batch with which it executes is greater than 0.7. 

Several approaches are possible to mitigate the chaining effect established in \reftheo{chaining}.
Moraru et al. \cite{epaxos} propose that pure writes (i.e., commands having no response value) return once they are committed.%
\footnote{
  In fact, it is possible to return even earlier, at the durable signal, that is once $f+1$ processes have received the command.
  To ensure linearizability, a later read must however wait for all the prior (conflicting or not) preceding writes.
}
In \cite{atlas}, the authors observe that as each read executes at a single process, they can be excluded from the computation of dependencies.
A third possibility is to wait until prior commands return before announcing a new one.
However, in this case, it is possible to extend \reftheo{chaining} by rotating the command coordinators to establish that a chain of size $n$ can form. 

In ROLL, the Load-Balancing and Optimal-Latency properties constrain the form of the \dds service.
More precisely, in a contention-free case, executing the service must consist in a back-and-forth between the command coordinator and the fast path quorum.
A weaker definition would allow some messages to be pending when $\announce$ returns. 
In this case, it is possible to sidestep the ROLL theorem provided that the system is synchronous:%
\footnote{
  Here, we consider that the synchronous system ensures that if $p$ sends $m$ then $m'$ to respectively $p$ and $q$ and $p$ is correct, $p$ receives $m$.
}
When replying to an announcement a process first sends its reply to the other fast path quorum nodes.
The fast path is taken by merging all of the replies.
Since the system is synchronous, a process recovering a command will retrieve all the replies at any node in the fast path quorum.
Note that under this weaker definition, the ROLL theorem (\reftheo{roll}) still applies in a partially synchronous model.
Moreover, a chaining effect (\reftheo{chaining}) is also possible, but it requires more asynchrony during a nice run. 

\section{Related work}
\labsection{relatedwork}


\paragraph{Protocols}
Early leaderless solutions to SMR include rotating coordinators and deterministic merge, aka. collision-fast, protocols.
We cover the first class of protocols in \refsection{leaderless:examples}.
In a collision-fast protocol \cite{deterministicmerge, collisionfast}, processes replicate an infinite array of vector consensus instances.
Each vector consensus corresponds to a round.
During a round, each process proposes a command (or a batch) to its consensus instance in the vector.
If the process is in late, its peers may take over the instance and propose an empty batch of commands.
Commands are executed according to their round numbers, applying an arbitrary ordering per round.
The size of the vector can change dynamically, adapting to network conditions and/or the application workload.
This technique is also used in Paxos Commit \cite{paxoscommit}.

When the ordering is fixed beforehand, processes must advance at the same pace.
To fix this issue, Mencius \cite{mencius} includes a piggy-back mechanism that allows a process to bail out its instances (i.e., proposing implicitly an empty batch).
Clock-RSM \cite{clockrsm} follows a similar schema, using physical clocks to bypass explicit synchronization in the good cases.

With the above protocols, commands still get delayed by slow processes.
Avoiding this so-called delayed commit problem \cite{mencius} requires to dynamically discover dependencies at runtime.
This is the approach introduced in Zieli{\'{n}}ski's optimistic generic broadcast \cite{optimisticgb} and EPaxos \cite{epaxos}.
Here, as well as in \cite{atlas}, replicas agree on a fully-fledged dependency graph.
Caesar \cite{caesar} uses timestamps to avoid cycles in the graph.
However, even in contention-free cases, committing a command can take two round-trips.
In our classification (see \reftab{properties}), this protocol does not have Optimal Latency.

\paragraph{Deconstruction}
In \cite{familyPODC16}, the authors introduce the dependency-set and map-agreement algorithms.
The two services allow respectively to gather dependencies and agree upon them.
A similar decomposition is proposed in \cite{bipaxos}.
Compared to these prior works, our framework includes the notion of fast path and distinguishes committed and stable commands.
An agreement between the processes is necessary only eventually and on the stable part of the dependency graph.
This difference allows to capture a wider spectrum of protocols.
Our dependency discovery service (\dds) is reminiscent of an adopt-commit object \cite{RbR} that allows processes to reach a weak agreement.
In our case, when the fast path flag is set, processes may disagree on at most the aborted dependencies of a command.

\paragraph{Complexity}
Multiple works study the complexity of consensus, the key underlying building block of SMR.
Lamport \cite{consensusBounds} proves several lower bounds on the time and space complexity of this abstraction.
The Hyperfast Learning theorem establishes that consensus requires one round-trip in the general case.
This explains why we call optimal protocols that return after two message delays.
The Fast Learning theorem requires that $n > 2F + f$.
This result explains the trade-off between fault-tolerance and performance in Fast Paxos \cite{fastpaxos}.
However, it does not readily apply to Leaderless SMR because only coordinator-centric quorums are fast in that case.
For instance, EPaxos is able to run with $F=1$ and $f=1$ in a 3-process system.
The ROLL theorem (\refsection{roll}) accurately captures this difference.

Traditional complexity measures for SMR and consensus (e.g., the latency degree \cite{schiper}) consider contention-free and/or perfectly synchronous scenarios.
In \cite{rachidSMR}, the authors study the complexity of SMR over long runs.
The paper shows that completing an SMR command can be more expensive than solving a consensus instance.
Their complexity measure is different from ours and given in terms of synchronous rounds.
In \refsection{chaining}, we show that in an almost synchronous scenario, contention may create arbitrarily long chains in Leaderless SMR.
We discuss mitigation measures in \refsection{discussion}.

\section{Conclusion}
\labsection{conclusion}

This paper introduces a framework to decompose leaderless state-machine replication (Leaderless SMR) protocols.
The framework allows to break down representative protocols into two simple building blocks: a dependency discovery service and a consensus service.
We then define a set of desirable properties for Leaderless SMR: (R)eliability, (O)ptimal (L)atency and (L)oad balancing.
Protocols matching all of these properties satisfy the inequality $2F + f - 1 \leq n$, where $n$ is number of processes, $f$ the maximum number of failures tolerated by the protocol, and $n-F$ the size of the fast path quorum.
Further, we establish that protocols that optimally solve this inequality suffer from a chaining effect.
This effect explains the tail latency of some Leaderless SMR protocols in real-world deployments.

\vspace{2em}
\paragraph{Acknowledgments}
The authors thank Vitor Enes and Alexey Gotsman for fruitful discussions on Leaderless SMR.
This research is partly funded by the ANR RainbowFS project and the H2020 CloudButton project.



\newpage
\bibliography{biblio}

\appendix

\iflong
\newpage
\section{System Model}
\labappendix{model}

We formulate our results for an asynchronous distributed system augmented with failure detectors \cite{CT96}.
This section recalls the fundamentals of this common model of computation then present some technical lemmas.
These lemmas are used in the follow-up to establish our complexity results regarding Leaderless SMR.

\subsection{Model}
\labappendix{model:model}

We consider an asynchronous distributed system consisting of a finite set of \emph{processes} $\procSet = \{p_1, p_2, \ldots, p_n\}$.
Processes may fail-stop, or \emph{crash}, and halt their computations.
A failure pattern is a function $F : \naturalSet \rightarrow 2^{\procSet}$ that captures how processes crash over time.
Processes that crash never recover from crashes, that is, for all time $t$, $F(t) \subseteq F(t+1)$.
If a process fails, we shall say that it is \emph{faulty}.
Otherwise, if the process never fails, it is said \emph{correct}.

\paragraph{Failure detectors}
A failure detector is an oracle $\fd$ that processes may query locally during an execution.
This oracle abstracts information, regarding synchrony and failures, available to the processes.
More precisely, a failure detector $\fd$ is a mapping that assign to a failure pattern $F$, one or more histories $\fd(F)$.
Each history $H \in \fd(F)$ defines for each process $p$ in the system, the local information $H(p,t)$ obtained by querying $\fd$ at time $t$.
The co-domain of $H : \procSet \times \naturalSet \rightarrow R$ is named the range of the failure detector.
An environment, denoted $\E$, is a set of failure patterns.

In the vein in \cite{realistic}, we only consider \emph{realistic} failure detectors.
This class of failure detectors cannot forecast the future.
This means that if two failure patterns $F$ and $F'$ are identical up to time $t$, then for any history $H \in \fd(F)$, there exists $H' \in \fd(F')$ identical up to time $t$ to $H$.

\paragraph{Message buffer}
Processes communicate with the help of messages taken from some set $\mathit{Msg}$.
A message $m$ is sent by some sender ($\sender(m)$) and addressed to some recipient ($\dst(m)$).
The sender may define some content ($\payload(m)$) before sending the message.
A message buffer, denoted $\BUFF$ , contains all the messages that were sent but not yet received.
More precisely, $\BUFF$ is a mapping from $\procSet$ to $2^{\mathit{Msg}}$.
When a process $p$ attempts to receive a message, it either removes some message from $\BUFF[p]$, or returns a special null message.
Note that $p$ may receive the null message even if the message buffer contains a message $m$ addressed to $p$.

\paragraph{Protocol}
A protocol $\mathcal{P}$ consists of a family of $n$ deterministic automata, one per process in $\procSet$.
Computation proceeds in steps of these automata.
At each step, a process $p$ executes atomically one of the following instructions:
\begin{inparaenumorig}[]
\item receive some message $m$;
\item fetch some value $d$ from the local failure detector module;
\item change its local state according to $\mathcal{P}$; or
\item send some message $m$ to another process.
\end{inparaenumorig}%
A \emph{configuration} of algorithm $\mathcal{P}$ specifies the local state of each process as well as the messages in transit (variable $\BUFF$).
In some initial configuration of $\mathcal{P}$, no message is in transit and each process $p$ is in some initial state as defined by $\mathcal{P}$.

\paragraph{Runs}
A run of algorithm $\mathcal{P}$ using failure detector $D$ in environment $\E$ is a tuple $\run = (F,H,I,S,T)$ where
\begin{inparaenumorig}[]
\item $F$ is a failure pattern in $\E$,
\item $H$ is a failure detector history in $D(F)$,
\item $I$ is an initial configuration of $\mathcal{P}$,
\item $S$ is a sequence of steps of $\mathcal{P}$, and
\item $T$ is a growing sequence of times (intuitively, $T[i]$ is the time at which step $S[i]$ is taken).
\end{inparaenumorig}%
A run whose sequence of steps is finite (respectively, infinite) is called a finite (respectively, infinite) run.
Every run $\run$ must satisfy the following standard \emph(well-formedness) conditions:
\begin{inparaenumorig}[\em(i)]
\item No process take steps after crashing;
\item The sequences $S$ and $T$ are either both infinite, or they are both finite and have the same length; and
\item The sequence of steps $S$ taken in the run conforms to the algorithm $\mathcal{P}$, the timing $T$ and the failure detector history $H$.
\end{inparaenumorig}%
A run $\run$ is \emph{admissible} for $\mathcal{P}$, or simply \emph{is a run of} $\mathcal{P}$, when it is well-formed and in addition:
\begin{inparaenumorig}
\item[\emph{(fairness)}] If $\run$ is infinite, every correct process takes an infinite number of steps in $\run$.
\item[\emph{(reliable links)}] Every process that infinitely often retrieves a message from $\BUFF$ eventually receives every message addressed to it.
\end{inparaenumorig}%
We shall write  $\runs^{\mathcal{P}}$ the runs of algorithm $\mathcal{P}$.
The superscript is ommitted when the algorithm we refer to is unambiguous.

Our results mostly concern nice runs \cite{niceRun}, that is failure-free runs during which the failure detector behave ``perfectly''.
More specifically, we consider that a run is \emph{nice} when there is no failure and the failure detector returns a constant value to the local process.
$\nruns^{\mathcal{P}}$ denote the nice runs of algorithm $\mathcal{P}$.

\paragraph{Additional notations}
When the context is clear, we do not distinguish a run from its sequence of steps.%
\footnote{
  This is particularly true for a nice run, since the failure detector history is constant.
}
Below, we introduce a handful of operators and shorthands that leverage this simplification.

Consider two sequence of steps $\lambda$ and $\lambda'$.
We note $(\lambda|P)$ the sub-sequence of steps taken by the processes $P \subseteq \procSet$ in $\lambda$.
Function $\proc(\lambda)$ returns the processes that take steps in $\lambda$.
We say that $\lambda$ is \emph{indistinguishable} from $\lambda'$ to $P \subseteq \procSet$ when $\lambda|P=\lambda'|P$.
As usual, if $\lambda \in \runs$, $\lambda'$ is indistinguishable from $\lambda$ to $\procSet$ and $\lambda'$ is well-formed, then $\lambda' \in \runs$.
If $\lambda = \lambda[0] \ldots s \ldots \lambda[n]$, then $(\lambda | \leq s)$ is the sequence $\lambda[0] \ldots s$, and $(\lambda | \geq s)$ equals $s \ldots \lambda[n]$.
The empty sequence is written $\epsilon$.
We note $\pref$ and $\suff$ respectively the prefix and suffix relations over the set of sequences.

Assume that an operation $\op$ is invoked in $\lambda$ then later returns some response $r$ to the local process.
This corresponds respectively to the steps $\inv(op)$ and $\resp(op,r)$ in $\lambda$.
We note $(\lambda | \op)$ the sub-sequence of steps ($\lambda | \haeq \inv(\op) | \hbeq \resp(\op,r)$), where $\hbeq$ and $\haeq$ are respectively the reflexive closure of the happen-before relation ($\hb$) and the reflexive closure of the converse of $\hb$.
For instance, ($\lambda | \announce(c)$) refers to the steps taken in $\lambda$ to announce command $c$.

\subsection{Technical Lemmas}
\labappendix{model:lemmas}

Below, we state a few results that concern nice runs of a protocol.
As pointed out previously, in a nice run there is no failure and the failure detector always returns the same value.
In this context, our first lemma is similar to Lemma~1 in FLP \cite{flp}.

\begin{lemma}
  \lablem{tools:flp}
  Consider two finite nice runs $\lambda \lambda'$ and $\lambda \lambda''$.
  If $\proc(\lambda') \inter proc(\lambda'') = \emptyset$, then $\lambda \lambda' \lambda''$ is a nice run.
\end{lemma}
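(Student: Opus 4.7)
The plan is to reason by a commutativity argument, in the spirit of FLP's classical lemma. Since $\proc(\lambda')$ and $\proc(\lambda'')$ are disjoint, the two tails touch independent slices of the configuration space reached at the end of $\lambda$. The admissibility of $\lambda \lambda'$ already gives that $\lambda'$ is a valid extension of $\lambda$, so the work is to verify that $\lambda''$ is still a valid extension of $\lambda \lambda'$, and that the resulting execution remains nice.

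To show that every step of $\lambda''$ is enabled after $\lambda \lambda'$, I would observe first that no process in $\proc(\lambda'')$ takes a step during $\lambda'$, so the local state of each such process at the end of $\lambda \lambda'$ is identical to its local state at the end of $\lambda$. Send, compute and failure-detector query steps depend only on the local state, so they behave exactly as they did in $\lambda \lambda''$. For receive steps, the crucial point is that any message addressed to a process $p \in \proc(\lambda'')$ that was present in $\BUFF[p]$ at the end of $\lambda$ is still there at the end of $\lambda \lambda'$, since only the recipient removes a message. Additional messages deposited during $\lambda'$ may enlarge these buffers, but this cannot disable any step: a receive step may pick exactly the message chosen in $\lambda \lambda''$, and the null message is always an admissible outcome. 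Hence each step of $\lambda''$ is enabled in the configuration reached after $\lambda \lambda'$ and produces the same local transition it produced in $\lambda \lambda''$.

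The only delicate point is the failure detector. Both $\lambda \lambda'$ and $\lambda \lambda''$ are nice, so they share the empty failure pattern and each uses a history that hands a constant value to every process; since the failure detector is realistic, we may extend the history used in $\lambda \lambda''$ up through $\lambda$ by the constant values of $\lambda \lambda'$, producing a single history consistent with $\lambda \lambda' \lambda''$. Concatenating the timing sequences (shifting $T''$ to follow $T'$) yields a well-formed run in which no process crashes and the failure detector output never varies, which matches the definition of a nice run. The main obstacle is simply to rule out interference through the message buffer, and this is settled by the observation that reception is a choice rather than a synchronous rendezvous; everything else follows from the disjointness of $\proc(\lambda')$ and $\proc(\lambda'')$.
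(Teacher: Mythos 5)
Your proof is correct and is essentially the detailed unfolding of what the paper dispatches with the single sentence ``follows from the model definition'': disjointness of $\proc(\lambda')$ and $\proc(\lambda'')$ preserves local states, per-process message buffers only grow from the viewpoint of $\proc(\lambda'')$, and the constant failure-detector histories of the two nice runs merge consistently. No gap; you have simply made explicit the steps the paper leaves implicit.
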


\begin{proof}
  Follow from the model definition.
\end{proof}

\begin{lemma}
  \lablem{tools:1}
  Consider a nice run $\lambda = \lambda' s s' \lambda''$.
  If $\proc(s) \neq \proc(s')$ and there is no message $m$ such that $s = \send(m)$ and $s' = \recv(m)$, then $\hat{\lambda} = \lambda' s' s \lambda''$ is a nice run.
\end{lemma}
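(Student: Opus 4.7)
The plan is to give a short commutativity (moverness) argument in the spirit of~\cite{moverness}: two adjacent steps taken by distinct processes may be swapped whenever their only shared resource, the message buffer, is not used in a way that creates a direct causal dependency between them.

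Write $p = \proc(s)$ and $q = \proc(s')$, and let $C$ denote the configuration reached after executing $\lambda'$. First I would establish that $s'$ is already enabled at $C$, and that applying $s'$ then $s$ from $C$ produces the same configuration as applying $s$ then $s'$. Since $p \neq q$, the two local automata live in disjoint components of the global state, so the only object they both could touch is $\BUFF$. Niceness makes the failure detector irrelevant, because its history is constant along the run and the failure pattern is empty.

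For the buffer I would do a short case split on whether each step is a send or a receive. If neither is a receive, the two steps only insert into $\BUFF$ (or act locally) and they commute as set-insertions. If both are receives, they consume from the disjoint slots $\BUFF[p]$ and $\BUFF[q]$ and again commute. The interesting case is $s = \send(m)$ and $s'$ a receive by $q$ (the symmetric case is analogous): the hypothesis guarantees that the message $m'$ (possibly the null message) consumed by $s'$ is not $m$, so $m'$ was already present in $\BUFF[q]$ at $C$. Hence $s'$ is enabled at $C$, performs the same local update, and the subsequent $s$ puts $m$ into $\BUFF[\dst(m)]$, reaching exactly the configuration $C_2$ produced by $\lambda' s s'$.

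Once this commutation is established, the tail $\lambda''$ can be replayed verbatim from the intermediate configuration obtained after the swap, so $\hat{\lambda}$ conforms to the algorithm. Well-formedness, the empty failure pattern, and the constant failure-detector history all transfer unchanged, and reliable-links and fairness are preserved because the multiset of steps, their senders and their recipients are unchanged. I expect the only real delicacy to be the send-receive subcase, which is precisely where the hypothesis $s \neq \send(m) \lor s' \neq \recv(m)$ bites; everything else is bookkeeping that follows from $p \neq q$.
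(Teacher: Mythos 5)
Your proof is correct, and it rests on the same underlying idea as the paper's — a Lipton-style mover argument in which the hypothesis is used in exactly one place, to exclude the configuration where a send and its matching receive would be inverted. The execution differs, however. You argue semantically: you show that $s'$ is already enabled in the configuration reached after $\lambda'$, that the two orders of $s$ and $s'$ yield the same configuration (via a case split on how each step touches $\BUFF$), and that $\lambda''$ can then be replayed verbatim. The paper argues syntactically: it checks that in $\hat{\lambda}$ every $\recv(m)$ is still preceded by the corresponding $\send(m)$ (the only pair whose relative order changes is $(s,s')$, and the hypothesis rules out the bad case), observes that $(\hat{\lambda}|p)=(\lambda|p)$ for every process $p$ because $\proc(s)\neq\proc(s')$, and then invokes the model's closure property that a well-formed sequence indistinguishable from a run to all of $\procSet$ is itself a run. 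The paper's route avoids any reasoning about enabledness or buffer contents — in particular it never needs your send/receive case analysis — at the price of leaning on that closure axiom; your route is more self-contained but carries the configuration bookkeeping explicitly. Both arguments conclude identically that niceness (empty failure pattern, constant failure-detector history) transfers unchanged.
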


\begin{proof}
  First, we establish the well-formedness of $\hat{\lambda}$.
  Consider some receive step $recv(m)$ in $\hat{\lambda}$.
  As $\lambda$ is well-formed, $\send(m) <_{\lambda} \recv(m)$.
  By hypothesis, $s \neq \send(m)$, thus $\send(m) <_{\hat{\lambda}} \recv(m)$.
  
  Then choose some process $p$.
  We have, $(\hat{\lambda}|p) = (\lambda' | p) (s' s | p) (\lambda'' | p)$, by distributivity of the projection operator.
  It remains to show that $(s' s | p) = (s s' | p) $.
  There are three cases to consider:
  (Case $\proc(s) = p$). As $\proc(s) \neq \proc(s')$, we have $(s' s | p) = s = (s s' | p) $.
  (Case $\proc(s') = p$). This case is symmetrical to the previous one.
  (Otherwise). We have $(s' s | p) = \epsilon = (s s' | p) $.
  It follows that $\hat{\lambda}$ is indistinguishable from $\lambda$ to $\procSet$.

  Since $\hat{\lambda}$ is well-formed and $\hat{\lambda}$ is indistinguishable from $\lambda$ to $\procSet$, then $\hat{\lambda}$ is a run.
  This run has the same failure pattern as $\lambda$, i.e., it is failure-free.
  Moreover, the failure detector behave perfectly.
  As a consequence, $\hat{\lambda} \in \nruns$.
\end{proof}

In the above lemma $s'$ left-move with $s$ \cite{moverness}, written $s \lmove s'$.
By extension, we may deduce that for some run $\lambda \lambda'$ and some set of processes $P$,

\begin{corollary}
  \labcor{tools:1}
  If none of the messages received in $S=(\lambda'|P)$ was sent in $(\lambda' \setminus S)$, then $\lambda S(\lambda' \setminus S)$ is a nice run
\end{corollary}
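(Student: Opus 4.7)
The plan is to derive \refcor{tools:1} from \reflem{tools:1} by iteratively applying adjacent swaps that push every step of $S$ leftward, past the steps of $\lambda' \setminus S$ that precede it, until all of $S$ forms a contiguous block at the front of $\lambda'$. I would proceed by induction on the number of inversions, defined here as the pairs $(s',s)$ such that $s \in S$, $s' \in \lambda' \setminus S$, and $s'$ precedes $s$ in the current sequence. The base case of zero inversions is immediate: the sequence then has the form $\lambda S (\lambda'\setminus S)$, which is what we want. For the starting point, $\lambda\lambda'$ is a nice run by hypothesis.

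For the inductive step, pick any step $s \in S$ that has some step $s' \in \lambda' \setminus S$ immediately to its left in the current sequence, and apply \reflem{tools:1} to swap $s$ and $s'$. Two conditions of the lemma must be checked. First, $\proc(s) \neq \proc(s')$: since $S = (\lambda'|P)$, we have $\proc(s) \in P$, while $s' \notin S$ forces $\proc(s') \notin P$. Second, there is no message $m$ with $s' = \send(m)$ and $s = \recv(m)$: if $s$ is a receive step, the hypothesis of the corollary states that no message received in $S$ was sent in $\lambda' \setminus S$, and since $s' \in \lambda' \setminus S$ this rules out the forbidden pairing. Hence \reflem{tools:1} yields a nice run with strictly one fewer inversion, and the induction terminates in finitely many steps since $\lambda'$ is finite along the segment being rearranged.

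The main subtlety lies in ensuring that the hypothesis of the corollary remains intact across all the intermediate swaps. Each swap only permutes two adjacent steps, one from $S$ and one from $\lambda' \setminus S$; it does not modify either subsequence individually, nor does it change which messages are sent in $\lambda' \setminus S$ or received in $S$. Thus the non-interference condition on messages is preserved at every intermediate sequence, which keeps \reflem{tools:1} applicable until the process halts with the run $\lambda S(\lambda'\setminus S)$, as required.
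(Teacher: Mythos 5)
Your proof is correct and follows exactly the route the paper intends: the paper states \refcor{tools:1} as an immediate ``extension'' of \reflem{tools:1}, and your inversion-counting induction with adjacent left-moves (checking $\proc(s)\neq\proc(s')$ via membership in $P$, and the send/receive condition via the corollary's hypothesis) is precisely the omitted argument.
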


\begin{corollary}
  \labcor{tools:2}
  If none of the messages sent in $S=(\lambda'|P)$ is received in $\lambda'$, then $\lambda (\lambda' \setminus S)S$ is a nice run.
\end{corollary}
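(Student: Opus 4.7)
The plan is to derive Corollary~\ref{cor:tools:2} from Lemma~\ref{lem:tools:1} by performing a bubble-sort on the steps of $\lambda'$, iteratively right-moving each step of $S$ past the steps of $\lambda' \setminus S$ that originally followed it. I would induct on the number of inversions in the current sequence, where an inversion is a pair of positions $(i,j)$ with $i<j$ such that the step at position $i$ belongs to $S$ and the step at position $j$ belongs to $\lambda' \setminus S$. The base case of zero inversions means every step of $S$ already sits to the right of every step of $\lambda' \setminus S$, so the sequence is exactly $(\lambda' \setminus S) S$ and there is nothing to do.

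For the inductive step, I would select an adjacent inversion, that is, two consecutive steps $s s'$ with $s \in S$ and $s' \in \lambda' \setminus S$, and swap them using Lemma~\ref{lem:tools:1}. Its first premise, $\proc(s) \neq \proc(s')$, is immediate because $\proc(s) \in P$ and $\proc(s') \in \procSet \setminus P$. The second premise, that no message $m$ satisfies $s = \send(m)$ and $s' = \recv(m)$, is precisely where the hypothesis of the corollary enters: if $s = \send(m)$, then $m$ is a message sent within $S$, and by assumption such a message is not received anywhere in $\lambda'$, so in particular $s' \neq \recv(m)$. Lemma~\ref{lem:tools:1} then yields a nice run with strictly fewer inversions, so the induction proceeds and must terminate in finitely many swaps.

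The main subtlety — which I do not regard as a serious obstacle but is worth flagging — is confirming that the hypothesis ``no message sent in $S$ is received in $\lambda'$'' remains applicable throughout the procedure. This is immediate because the multiset of steps in $\lambda'$ is unchanged by the swaps; only their order is altered. Furthermore, each swap exchanges one element of $S$ with one element of $\lambda' \setminus S$, so the relative order within $S$ and within $\lambda' \setminus S$ is preserved. When the induction terminates, the resulting ordering of $\lambda'$ is exactly $(\lambda' \setminus S) S$, and concatenating with $\lambda$ yields the nice run $\lambda (\lambda' \setminus S) S$ claimed by the corollary.
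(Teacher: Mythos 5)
Your proposal is correct and is essentially the argument the paper intends: the paper states \refcor{tools:2} as following ``by extension'' from \reflem{tools:1} via iterated commutation of adjacent steps, and your bubble-sort induction on inversions (with the observation that the corollary's hypothesis rules out the $\send$/$\recv$ exception in the lemma, and that $\proc(s)\neq\proc(s')$ holds since the two steps come from $P$ and $\procSet\setminus P$ respectively) is exactly the fleshed-out version of that. Your flagged subtlety --- that the hypothesis is preserved under reordering because the multiset of steps is unchanged --- is handled correctly.
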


When $A$ and $B$ are sequences of steps, $A \lmove B$ denotes that $B$ left-moves with $A$, that is $\forall (s,s') \in A \times B \ldotp s \lmove s'$.

\begin{lemma}
  \lablem{tools:4}
  $\forall \lambda \in \nruns \ldotp \forall \lambda' \pref \lambda \ldotp \lambda' \in \nruns$
\end{lemma}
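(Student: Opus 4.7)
The plan is to unpack the definition of a nice run and verify that every condition transfers to the prefix. Write $\lambda = (F,H,I,S,T)$ with $F \equiv \emptyset$ (no failures) and $H(p,\cdot)$ constant for each $p$ (perfect failure detector). Any prefix $\lambda' \pref \lambda$ corresponds to a tuple $(F,H,I,S',T')$ where $S' \pref S$ and $T'$ is the matching timing prefix of the same length as $S'$. The failure pattern and failure-detector history are unchanged, so the two defining properties of a nice run (failure-freedom and constant failure-detector output) are immediate.

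What remains is to check that $\lambda' \in \runs^{\mathcal{P}}$, i.e.\ that $\lambda'$ is both well-formed and admissible. For well-formedness: condition (i) is vacuous because no process crashes; condition (ii) holds by construction of the prefix; and condition (iii) (conformance to $\mathcal{P}$, $T'$ and $H$) holds because $S$ conforms to $\mathcal{P}$, $T$ and $H$, and the conformance of a step depends only on the local state reached, the timing, and the failure-detector values up to that step, all of which are identical in $\lambda$ and $\lambda'$.

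For admissibility, I would split on whether $\lambda'$ is finite or infinite. If $\lambda'$ is infinite, then since an infinite sequence has no proper infinite prefix, necessarily $\lambda' = \lambda$, and the admissibility conditions follow from $\lambda \in \nruns^{\mathcal{P}}$. If $\lambda'$ is finite, then the fairness condition is vacuously satisfied (it applies only to infinite runs), and the reliable-links condition is vacuously satisfied as well, since no process retrieves messages infinitely often in a finite execution. In either case $\lambda' \in \nruns^{\mathcal{P}}$, which concludes the argument.

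There is no real obstacle here: the lemma is essentially a bookkeeping statement, and the only point that requires any care is observing that the ``reliable links'' and ``fairness'' clauses are both vacuous on a finite prefix, so truncation never destroys admissibility.
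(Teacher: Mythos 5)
Your proof is correct and follows the same route as the paper, whose own argument is just a two-line assertion that a prefix of a run is a run and that the failure detector still behaves perfectly on the prefix. You simply fill in the bookkeeping the paper leaves implicit (prefix-closure of well-formedness, and the vacuity of fairness and reliable links on a finite prefix), which is exactly the right level of care.
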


\begin{proof}
  Choose some run $\lambda$ and some prefix $\lambda' \pref \lambda$.
  By construction $\lambda'$ is a run.
  Moreover as the failure detector behave perfectly in $\lambda$, it also behaves perfectly in $\lambda'$.
  From which it follows that $\lambda'$ is a run.
\end{proof}

\section{Proofs of \reftheotwo{reduction}{deciding}}
\labappendix{proofs}


This section contains the proofs of the theorems stated in \refsection{leaderless} which we deferred for readability.

\reduction*

\begin{proof}
  We build a Generic SMR protocol $A$ atop a Leaderless SMR protocol $B$ as follows.
  Each node running protocol $B$ holds a local copy of a partially ordered log $L$.
  This log is initially empty.
  Operation $\submit(c)$ in $A$ is mapped to operation $B.\submit(c)$.
  When command $c$ gets executed in $B$, we apply the following update to $L$: $L \leftarrow L \append \: c$.
  Clearly, this construction maintains that $L$ is a partially ordered graph over time.
  Furthermore, at the light of the definition of the operator $\append$, it is easy to see that any two conflicting commands gets ordered.
  In addition, this construction satisfies the three properties that define Generic SMR, as shown below.
  \begin{description}
  \item[\emph{(Non-triviality)}]
    In algorithm $B$, a command $c$ appears in some process dependency graph only if it was submitted before.
    Hence, in algorithm $A$, $c$ is in $L$ only if it was submitted before.
  \item[(Stability)]
    Recall that this property holds when, for any partially ordered log $L$, at any point in time $t$, $L_t \sqsubseteq L_{t+1}$.
    The $\append$ operator does not remove edges or nodes, thus $L_t$ is a subgraph of $L_{t+1}$.
    Now assume, for the sake of contradiction, that $L_t$ does not prefix $L_{t+1}$.
    There must exist $\cmdc$ in $L_t.V$ such that $(\cmdd, \cmdc) \in L_{t+1}.E$ and $(\cmdd, \cmdc) \notin L_t.E$.
    If $(\cmdd,\cmdc) \notin L_t$ then $\cmdd \in L_t.V$, as $\cmdd$ cannot be added between time $t$ and $t+1$ by definition of $\append$.
    This leads to $(\cmdd, \cmdc) \notin L_{t+1}$;
    a contradiction.
  \item[\emph{(Consistency)}]
    We prove that for any two processes $p$ and $p'$, the set $\{L_{p}, L_{p'}\}$ is compatible.
    To achieve this, we show that $L = (L_{p} \union L_{p'})$ is a partially ordered log that suffixes both $L_{p}$ and $L_{p'}$.
    To this end, let us consider two conflicting commands $\cmdc$ and $\cmdd$ in $L$.

    \begin{claim}
      \labclaim{reduction:1}
      Commands $\cmdc$ and $\cmdd$ cannot be in $L_{p}.V \xor L_{q}.V$.
    \end{claim}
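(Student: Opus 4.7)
The plan is to argue by contradiction: assume that the two conflicting commands straddle the symmetric difference, say $\cmdc \in L_p.V \setminus L_{p'}.V$ and $\cmdd \in L_{p'}.V \setminus L_p.V$ (the other split is symmetric). The reduction appends a command to a local log exactly when the underlying Leaderless SMR protocol executes it, which by \refalg{execute} presupposes that the command is stable. So $\cmdc$ is stable at $p$ and $\cmdd$ is stable at $p'$; in particular both are committed, and by the Stability property of Leaderless SMR their dependency sets are pinned to global values.

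Next I would invoke the Consistency property of Leaderless SMR: because $\cmdc \conflict \cmdd$ and both are committed, at least one of $\cmdc \in \deps(\cmdd)$ or $\cmdd \in \deps(\cmdc)$ must hold. Consider the first alternative and focus on the moment $p'$ fires $\execute(\cmdd)$. Stability of $\cmdd$ at $p'$ forces every element of $\deps^*(\cmdd)$ to be committed at $p'$; since $\cmdc \in \deps^*(\cmdd)$ and $\deps^*(\cmdc) \subseteq \deps^*(\cmdd)$, the command $\cmdc$ is not merely committed but actually stable at $p'$. Therefore $\cmdc$ belongs to the maximal stable batch $\beta$ computed at \refline{execute:4}, the loop at \refline{execute:5} iterates through $\cmdc$, and $\cmdc$ is thereby appended to $L_{p'}$. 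This contradicts $\cmdc \notin L_{p'}.V$.

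The symmetric alternative $\cmdd \in \deps(\cmdc)$ is dispatched in the same way with $p$ and $p'$ swapped: at $\execute(\cmdc)$, stability forces $\cmdd \in \beta$, hence $\cmdd \in L_p.V$, again contradicting the assumption. The one delicate point in the argument is the promotion of $\cmdc$ from \emph{committed at $p'$} to \emph{stable at $p'$}. This relies on the downward closure of $\deps^*$ together with the fact that stability is a predicate over $\deps^*$ only, so once $\cmdd$ is stable every ancestor of $\cmdd$ in its dependency graph inherits stability. Everything else reduces to a direct application of the Consistency and Stability properties of Leaderless SMR and of the definition of the $\execute$ action.
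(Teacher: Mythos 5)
Your proof is correct and reaches the paper's contradiction, but it arranges the same ingredients in the opposite direction. The paper applies \refinv{del} twice --- at $p$, which executed a batch containing $\cmdc$ without ever executing $\cmdd$, and symmetrically at $p'$ --- to conclude $\cmdd \notin \deps^*(\cmdc)$ and $\cmdc \notin \deps^*(\cmdd)$, which directly contradicts Consistency. You instead start from Consistency to obtain (wlog) $\cmdc \in \deps(\cmdd)$ and then unfold the semantics of \refalg{execute} to show that $p'$ must have executed $\cmdc$ no later than $\cmdd$. In effect you re-derive the instance of \refinv{del} that the paper uses as a black box; this makes your argument more self-contained (the paper never proves the invariant explicitly, only asserts that the batching maintains it) at the cost of being longer. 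Two small points to tighten, neither of which is a real gap: (i) a command enters $L_{p'}$ whenever it belongs to the batch of \emph{some} invocation $\execute(\cmda)$, not necessarily of $\execute(\cmdd)$ itself; your stability-propagation step still goes through because $\cmdd \in \deps^*(\cmda)$ gives $\cmdc \in \deps^*(\cmda)$ and $\deps^*(\cmdc) \subseteq \deps^*(\cmda)$, but the argument should be phrased against the batch that actually contains $\cmdd$; (ii) stability of $\cmdd$ at $p'$ only makes every member of $\deps^*(\cmdd)$ \emph{non-pending}, so to promote $\cmdc$ to committed (and hence stable) at $p'$ you should add that $\cmdc$ cannot be aborted there, e.g.\ because aborted commands are removed from every $\deps(\cdot)$ and so would not appear in $\deps^*(\cmdd)$.
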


    \begin{proof}
    By contradiction, assume $\cmdc$ and $\cmdd$ belong to different logs (wlog. say respectively $L_{p}$ and $L_{p'}$).
    Applying Invariant~1 to $p$ leads to the fact that $d \notin \deps^*(\cmdc)$. 
    Symmetrically from process $p'$, we have that $\cmdc \notin \deps^*(\cmdd)$.
    A contradiction to the Consistency property of Leaderless SMR. 
    \end{proof}

    \begin{claim}
      \labclaim{reduction:2}
      For any process $q \in \{p,p'\}$, if $(\cmdc,\cmdd)$ is in $L.E$ and $\cmdd$ is in $L_{q}.V$ then $(\cmdc,\cmdd)$ is in $L_{q}.E$.
    \end{claim}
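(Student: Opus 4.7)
The plan is to exploit the fact that the execution order on conflicting commands is globally determined by $\deps^*$ and the canonical order $<$, hence identical at every process that executes both commands. Once this is established, the edge $(\cmdc,\cmdd)$, which records that $\cmdc$ was appended (executed) at some process before $\cmdd$, must be present at every process that has already appended $\cmdd$.

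Concretely, I would proceed in four steps. First, since $(\cmdc,\cmdd) \in L.E$ and edges are only introduced by the $\append$ operator, $\cmdc$ and $\cmdd$ conflict, and the edge was added at some process $r \in \{p,p'\}$ during the append triggered by $\cmdc$'s execution at $r$; consequently both commands are stable at $r$. By Claim~\ref{claim:reduction:1}, $\cmdc$ and $\cmdd$ lie in $L_p.V \cap L_{p'}.V$, so in particular they are both stable at $q$ as well. Second, I would use \refinv{del} and \refinv{del2} together with the Consistency property of Leaderless SMR to argue that at any process which executes both of them, $\cmdc$ is executed before $\cmdd$ iff $\cmdc \in \deps^*(\cmdd)$ and either $\cmdd \notin \deps^*(\cmdc)$, or $\cmdd \in \deps^*(\cmdc)$ with $\cmdc < \cmdd$. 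Indeed, if $\cmdc$ is executed before $\cmdd$ in different batches, \refinv{del} gives $\cmdd \notin \deps^*(\cmdc)$, and Consistency then forces $\cmdc \in \deps(\cmdd) \subseteq \deps^*(\cmdd)$; if they are in the same batch, \refinv{del2} gives $\cmdc \in \deps^*(\cmdd)$ and the tie is broken by $<$.

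Third, because $\cmdc$ and $\cmdd$ are stable at both $r$ and $q$, the Stability property of Leaderless SMR ensures $\deps(\cmdc)$ and $\deps(\cmdd)$—and by induction on the closure, $\deps^*(\cmdc)$ and $\deps^*(\cmdd)$—take the same value at $r$ and at $q$. Combining this with Step~2 yields that the relative execution order of $\cmdc$ and $\cmdd$ at $q$ matches their relative order at $r$, so $\cmdc$ is executed at $q$ before $\cmdd$. Fourth, at the moment $q$ appends $\cmdd$ to its local partially ordered log, $\cmdc$ is already a vertex of $L_q$ and conflicts with $\cmdd$; by definition of $\append$, the edge $(\cmdc,\cmdd)$ is inserted into $L_q.E$, which is the desired conclusion.

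The main obstacle is Step~2: carefully reconciling the two invariants with the Consistency property to conclude that execution order on conflicting stable commands is a purely combinatorial function of $(\deps^*, <)$, and is therefore the same at every process that executes both commands. Once this process-independent characterisation is in hand, Steps~3 and~4 are essentially bookkeeping on the $\append$ operator.
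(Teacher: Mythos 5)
Your proof is correct and follows essentially the same route as the paper's: an edge at one process yields $\cmdc \eorder \cmdd$ there, \refinv{del} and \refinv{del2} together with Consistency give $\cmdc \in \deps^*(\cmdd)$, Stability transfers this to $q$, and the process-independence of $\eorder$ (a function of $\deps^*$ and $<$ only) forces the edge into $L_{q}.E$ --- you merely spell out the role of Consistency that the paper leaves implicit behind ``by Invariants 1 and 2''. One small slip: \refclaim{reduction:1} only rules out that two conflicting commands are split across the two logs, so it does not by itself place $\cmdc$ in $L_{q}.V$ or make it stable at $q$; this is harmless, though, since $\cmdc$'s stability at $q$ follows from $\cmdd$ being stable at $q$ together with $\cmdc \in \deps^*(\cmdd)$, which your Step~3 establishes.
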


    \begin{proof}
      Since $(\cmdc,\cmdd) \in L.E$,  $(\cmdc,\cmdd)$ belongs to (say) $L_{p}.E$. 
      This leads to $\cmdc \eorder \cmdd$ at $p$
      By Invariants 1 and 2, $\cmdc \in \deps^*(\cmdd)$ at $p$.
      Now, if $\cmdd \in L_{q}.V$, $q$ executes $\cmdd$.
      It follows that $\cmdd$ was stable at $q$.
      By the Stability property of Leaderless SMR, $\cmdc \in \deps^*(\cmdd)$ at $q$.
      As a consequence, $\cmdc \eorder \cmdd$ at $q$ and $(\cmdc,\cmdd)$ is in $L_{q}.E$.
    \end{proof}
      
    The end of the proof goes as follows.
    \begin{enumerate}
    \item $L_{p} \pref L \land L_{p'} \pref L$ \\
      From \refclaim{reduction:2}.
    \item $L$ is a partially ordered log.
      \begin{enumerate}
      \item $\forall \cmdc,\cmdd \in L.V \ldotp \cmdc \conflict \cmdd \implies ((\cmdc,\cmdd) \in L.E \lor (\cmdd,\cmdc) \in L.E)$ \\
        From \refclaim{reduction:1}.
      \item $L$ is a directed acyclic graph.\\
        If $L$ is cyclic, by \refclaim{reduction:2}, either $L_{p}$ or $L_{p'}$ is cyclic.
        Contradiction.
      \end{enumerate}      
    \end{enumerate}
  \end{description}
\end{proof}

\deciding*

\begin{proof}
  Assume a run $\lambda$ of \refalg{deciding}.  
  \begin{description}
    \item [\emph{(Validity)}]
      Let $\cmdc$ be a decided command at some process $p$ in $\lambda$.
      By definition, $\deps(\cmdc) \neq \bot$ at process $p$.
      This is only possible through the execution of \refline{commit:5} or \refline{commit:8}.
      If \refline{commit:5} gets executed, then $p$ took step $\submit(\cmdc)$ previously.
      Otherwise, \refline{commit:8} is executed.
      To execute \refline{commit:8} it is necessary to receive a message $(\cmdc,D)$ at \refline{commit:7}.
      Such a message is sent at \refline{commit:6} by some process $q$.
      Executing \refline{commit:6} implies that \refline{commit:5} was executed by $q$ before, which boils down to the previously analyzed case.

    \item [\emph{(Consistency)}]
      Let $\cmda$ and $\cmdb$ be two conflicting committed commands at some process $p$.
      By definition, the two commands are committed when $\deps(\cmda), \deps(\cmdb) \in 2^{\cmdSet}$ holds at $p$.
      Similarly to the Validity property, this is the case when \refline{commit:5} or \refline{commit:8} is executed.
      As seen in the proof of the validity property, executing \refline{commit:8} can be traced back to the execution of \refline{commit:5}.
      Wlog. we only analyze the case where \refline{commit:5} is executed hereafter.
      
      By the Validity property of consensus, the value $D$ assigned to $\deps(\cmda)$ at $p$ was proposed before by a process $q$.
      The operation that yields $D$ is executed at \refline{commit:3} by $q$ through the use of the $\dds$ service.
      Similarly, we may define a process $q'$ such that the value of $D'=\deps(\cmdb)$ at $p$ is the result of a call to $\announce(\cmdb)$ by $q'$.

      By the Visibility property of the $\dds$ service, we have either $\cmdb \in D$ or $\cmda \in D'$.
      Hence, the Consistency property of Leaderless SMR holds.

    \item [\emph{(Stability)}]
      Assume that $\cmdc$ is eventually stable at processes $p$ and $p'$. 
      Let $E$ and $E'$ be the value of $\deps(\cmdc)$ at respectively $p$ and $p'$ when this happens.
      In what follows, we prove that $E'=E$.
     
      To establish this result, we first show that processes agree over aborted commands.
      Let $q$ and $q'$ be two processes that decide some command $\cmda$.
      As with prior properties, we assume wlog. that such a decision is taken at \refline{commit:5}.
      Assume $q$ aborts $\cmda$.
      The output of \refline{commit:5} depends on the computation at \reflinestwo{commit:3}{commit:4}.
      Since command $\cmda$ is aborted, necessarily $q$ takes the slow path.
      By the Validity property of $\cons_{\cmda}$, $\top$ is proposed to $\cons_{\cmda}$ by some process $q''$.
      This value is the response of $\announce(\cmda)$ at \refline{commit:2} by $q''$.
      Then, consider the following two cases.
      \begin{inparaenum}
      \item If $q'$ takes also the slow path, by the Agreement property of $\cons_{\cmda}$, it should also abort $\cmda$.
      \item Otherwise $q'$ follows the fast path.
        In that case, by the Validity property of $\cons_{\cmda}$, $q'$ returns some value $(\any,\true)$ from $\announce(\cmda)$ at \refline{commit:2}.
        Since $q''$ returns $\top$ from $\announce(\cmda)$, this case contradicts the Weak Agreement property of the $\dds$ service.
      \end{inparaenum}

      Now, let $F$ and $F'$ be the value of $\deps(\cmdc)$ first assigned by respectively $p$ and $p'$.
      Wlog. assume that this assignment occurs at \refline{commit:5}.
      We show that for every command $\cmda \in F \xor F'$, $\cmda$, calling $\announce(\cmda)$ returns $\top$.
      If $p$ and $p'$ take the slow path, then $F=F'$ by the Agreement property of $\cons_{\cmdc}$.
      Otherwise, one of the two processes, say $p$, takes the fast.
      This implies that $p$ returns $(F,\true)$ from $\announce(\cmda)$.
      If $p'$ takes also the fast path, $(F',\true)$ is returned from $\announce(\cmda)$.
      Otherwise, by the Validity property of $\cons_{\cmda}$, some process $p''$ returns $(F',\false)$ from $\announce(\cmda)$.
      In both cases, the claim follows from the Weak Agreement property of the $\dds$ service.

      We now prove that $F$ and $F'$ converge toward $E$, the value of $\deps(\cmdc)$ when $\cmdc$ is stable at $p$.
      These sets are bounded and the only update operation is the removal of an aborted command.
      Pick $\cmda \in F \xor F'$.
      \begin{inparaenumorig}
      \item[(Case $\cmda \in F \setminus F'$)]
        Since $\cmdc$ is stable, eventually $\cmda$ is decided.
        Thus, eventually $p$ removes $\cmda$ from $F$.
      \item[(Otherwise)]
        Symmetrical to the previous one.
      \end{inparaenumorig}
  \end{description}
\end{proof}

\section{The ROLL Theorem}
\labappendix{roll}

Below, we define formally the three properties introduced in \refsection{leaderless:properties}.
Then, we present two technical lemmas that characterize the behavior of Leaderless SMR protocols during nice runs.
These lemmas are used to show the ROLL theorem and the chaining effect in the next section.

\subsection{The Properties}
\labappendix{roll:prop}

In \refsection{leaderless:properties}, we introduce Reliability, Optimal Latency and Load Balancing as three core properties of Leaderless SMR.
Reliability guarantees that the protocol makes progress even if up to $f$ failures occur.
This means that once a command is submitted, the protocol must take a decision, possibly aborting it.
Given a run $\lambda$, let us write $\cmdc \in \lambda$ when $\submit(\cmdc)$ is invoked in $\lambda$.
Then, we define this property as follows.

\begin{description}
\item[(\emph{Reliability})]
  In every run, if there are at most $f$ failures, every submitted command gets eventually decided at every correct process.
  \begin{displaymath}
    \forall \lambda \in \uruns \ldotp \forall \cmdc \in \lambda \ldotp \forall q \in \correct(\lambda) \ldotp \faults(\lambda) \leq f  \implies decide_q(c) \in \lambda
  \end{displaymath}
\end{description}

Optimal Latency requires that in a nice run every command commits after two message delays.
Moreover, in the absence of contention the command is immediately stable.

Message delays measure the time complexity of a sequence of steps, neglecting the cost of local computations.
In detail, the latency of a causal path $\rho$, written $\latency(\rho)$, is the number of consecutive $\send(m)$ then $\recv(m)$ steps in $\rho$.
Denoting $\hb$ the happens-before relation \cite{lamportTime} in a run $\lambda$, $\cpaths(\lambda)$ contains all the maximal chains in $(\lambda, \hb)$.
The latency of $\lambda$ is then defined as $\Delta(\lambda) = max\{\latency(\rho): \rho \in cpaths(\lambda)\}$.

To track contention during a run, we introduce function $\contended(\lambda,\cmdc)$.
This function returns $\true$ when there exists a command $\cmdd$ conflicting with $\cmdc$ such that
\begin{inparaenumorig}[]
\item $\cmdd$ is submitted before $\cmdc$, and
\item $\cmdd$ is not committed at $\coordc$ when $\cmdc$ is submitted.
\end{inparaenumorig}
Optimal Latency is then specified as follows.

\begin{description}
\item[(\emph{Optimal Latency})]
  During a nice run, every call to $\announce(d)$ returns a tuple $(D,\flag)$ after two message delays such that
  \begin{inparaenum}
  \item if there is no concurrent conflicting command to $c$, then $\flag$ is set to $\true$,
  \item $D \in 2^{\cmdSet}$, and
  \item for every $d \in D$, $d$ was announced before.
  \end{inparaenum}
  \begin{displaymath}
    \begin{array}{l@{~}l@{~}l}
    \forall \lambda \in \nruns \ldotp \forall \cmdc \in \cmdSet \ldotp 
    \resp(\announce(\cmdc),(D,\flag)) \in \lambda
    \implies
    & \land & \latency(\lambda|\announce(\cmdc)) = 2 \\
    & \land & (\flag = \true \lor \contended(\lambda,\cmdc)) \\
    & \land & D \in 2^{\cmdSet} \\
    & \land & \forall d \in D \ldotp \invocation{p}{\announce(\cmdd)} \\
    &       & \hspace{5em} \hb_{\lambda} \responseVoid{q}{\announce(\cmdc)}
    \end{array}
  \end{displaymath}
\end{description}

An important property of Leaderless SMR protocols is to distribute the task of ordering conflicting commands across processes.
This characteristic is captured by the Load Balancing property.
In detail, this property requires that during a nice run
\begin{inparaenum}
\item progress can be made using any fast path quorum, and
\item when returning from the announcement of a command, no message gets undelivered.
\end{inparaenum}

\begin{description}
\item[(\emph{Load balancing})]
  During a nice run, any fast path quorum in $\fquorums(\cmdc)$ can be used to announce a command $\cmdc$.
  \begin{displaymath}
    \begin{array}{l@{~}l@{~}l}
      \forall \lambda \in \nruns \ldotp \forall \cmdc \notin \lambda \ldotp \forall Q \in \fquorums(\cmdc) \ldotp \exists \lambda' \ldotp
      & \land & \lambda(\lambda'|\announce(\cmdc)) \in \nruns \\ 
      & \land & \proc(\lambda'| \announce(\cmdc)) = Q \\ 
      & \land & \pending(\lambda'|\announce(\cmdc)) = \emptySet 
    \end{array}
  \end{displaymath}
  %
\end{description}

\subsection{Characterizing ROLL protocols}
\labappendix{characterizing}


Assume a ROLL protocol $\mathcal{P}$ and consider some nice run $\lambda$ of $\mathcal{P}$ during which command $\cmdc$ is announced.
Since $\mathcal{P}$ ensures Optimal Latency, announcing $\cmdc$ takes two message delays.
This means that during the announcement of $\cmdc$ a set of requests is sent by the coordinator to which a set of processes replies.
The lemma below characterizes precisely this pattern, where $\Gamma$ is a shorthand for $\lambda|\announce(\cmdc)$.

\begin{lemma}
  \lablem{roll:0}
  $
  \forall \mathscr{C} \in \cpaths(\Gamma) \ldotp \exists m, m', \rho, \rho', \rho'' \ldotp
  \mathscr{C}
  =
  \invocation{\coordc}{\announce(\cmdc)}
  \cdot
  \rho
  \cdot
  \send(m)
  \cdot \\
  \recv(m)
  \cdot
  \rho'
  \cdot
  \send(m')
  \cdot
  \recv(m')
  \cdot
  \rho''
  \cdot
  \responseVoid{\coordc}{\announce(\cmdc)}
  $
\end{lemma}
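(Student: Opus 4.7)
The plan is to unpack the definition of $\Gamma = \lambda | \announce(\cmdc)$ and then combine Optimal Latency with Load Balancing to pin down the shape of each maximal chain. Fix $\mathscr{C} \in \cpaths(\Gamma)$. By the definition of the $(\lambda|\op)$ operator recalled in \refappendix{model:model}, every event $x \in \Gamma$ satisfies $\invocation{\coordc}{\announce(\cmdc)} \hbeq x \hbeq \responseVoid{\coordc}{\announce(\cmdc)}$. Hence these two events are respectively the unique minimum and maximum of $(\Gamma, \hb)$, and both occur at $\coordc$. Consequently $\mathscr{C}$ must begin with $\invocation{\coordc}{\announce(\cmdc)}$ and end with $\responseVoid{\coordc}{\announce(\cmdc)}$.

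Next, Optimal Latency gives $\latency(\Gamma) = 2$, so $\mathscr{C}$ contains at most two consecutive $\send(m) \cdot \recv(m)$ pairs. Since a chain that starts and ends at $\coordc$ must alternate between $\coordc$-blocks and ``away'' blocks separated by such pairs, the number of send--receive pairs on the chain is necessarily even, and thus equals $0$ or $2$. The main obstacle is to exclude the case $0$. To do so, I would invoke Load Balancing: in a nice run the announcement uses a fast path quorum $Q \in \fquorums(\cmdc)$ of size $n - F$ containing $\coordc$, and the CAP bound noted in \refsection{leaderless:properties} yields $F \leq \lfloor (n-1)/2 \rfloor$, so $|Q| \geq 2$. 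Thus at least one process $q \neq \coordc$ takes steps in $\Gamma$, and the no-pending-message clause of Load Balancing ensures that $q$'s steps are causally linked back to the response via a reply message. A chain supposedly staying entirely at $\coordc$ could then be strictly extended by inserting the $\send/\recv$ pair toward $q$ together with the matching return pair, contradicting maximality of $\mathscr{C}$.

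Finally, given that $\mathscr{C}$ uses exactly two send--receive pairs and starts and ends at $\coordc$, the first $\send(m)$ must originate at $\coordc$ and the second $\recv(m')$ must terminate at $\coordc$. Grouping the local steps before $\send(m)$, those between $\recv(m)$ and $\send(m')$, and those after $\recv(m')$ into the sub-sequences $\rho$, $\rho'$, $\rho''$ yields the decomposition stated in the lemma. The delicate point of the argument is really the exclusion of length-zero chains above; the remainder amounts to reading off the structure of a two-delay causal path between two events at the same process.
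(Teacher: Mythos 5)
Your proof is correct in its core and follows the paper's proof for the main step: both identify $\invocation{\coordc}{\announce(\cmdc)}$ and $\responseVoid{\coordc}{\announce(\cmdc)}$ as the minimum and maximum of $(\Gamma,\hb)$, so that every maximal chain begins and ends with them, and both then read the two-hop decomposition off Optimal Latency. Where you diverge is that the paper simply asserts $\latency(\mathscr{C})=2$ from Optimal Latency and stops, whereas you notice that the property, as formally defined via $\Delta(\lambda)=\max\{\latency(\rho):\rho\in\cpaths(\lambda)\}$, only bounds each chain's latency by $2$ from above, and you spend effort excluding the cases $0$ and $1$. That extra care is to your credit, but two remarks on how you do it. First, your parity claim is wrong as stated: a causal chain from $\coordc$ back to $\coordc$ is a closed walk on the (complete, hence non-bipartite) process graph, so its hop count need not be even --- e.g.\ $\coordc \to q \to r \to \coordc$ has three hops. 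What actually kills the one-hop case is not parity but the endpoint constraint combined with the model's stipulation that a message is sent \emph{to another process}: a single hop starting at $\coordc$ cannot terminate at $\coordc$. Second, the exclusion of zero hops by ``extending'' an all-local chain is the genuinely delicate point: to insert $\send(m)\cdot\recv(m)\cdot\ldots$ into a chain you must check that the inserted events are comparable under $\hb$ to \emph{every} element already in the chain, not merely that they fit between two consecutive ones; if $\coordc$ took a local step concurrent with the remote quorum member's steps, the insertion would fail and the all-local chain would be maximal with latency $0$. This is really a soft spot in the paper's formalization (which your argument inherits rather than creates), and the Load Balancing detour you invoke does not by itself close it. Net: same skeleton as the paper, with a more scrupulous but imperfect treatment of the one step the paper elides.
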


\begin{proof}
  Let $s$ be the smallest element in $\mathscr{C}$.
  Since $\mathscr{C} \in \cpaths(\Gamma)$ holds, we have that $s \in \lambda |\haeq \invocation{\coordc}{\announce(\cmdc)}$.
  Therefore $\invocation{\coordc}{\announce(c)} \hbeq s$.
  The causal path $\mathscr{C}$ is a maximal chain in $\Gamma$, thus $s = \invocation{\coordc}{\announce(\cmdc)}$.
  Analogously, if we define $s'$ as the largest element in $\mathscr{C}$, we have that $s' = \responseVoid{\coordc}{\announce(\cmdc)}$.

  Then, $\latency(\mathscr{C})=2$ by Optimal Latency.
  As a consequence, there exists two messages $m$ and $m'$ and three sequences of steps $\rho$, $\rho'$ and $\rho''$ such that
  \begin{inparaenum}
  \item $m$ is send by $\coordc$ after $\invocation{\coordc}{\announce(\cmdc)} \cdot \rho$;
  \item $m$ is received by some process $q$ that executes the steps $\rho'$ then sends a message $m'$ to $\coordc$; and
  \item $\coordc$ receives $m'$ and executes the steps $\rho''$ before the step $\responseVoid{\coordc}{\announce(\cmdc)}$.
  \end{inparaenum}
\end{proof}

At the light of the above characterization, we call \emph{request} the first message exchanged in some causal path of $\Gamma$.
Similarly, a reply is the answer received by $\coordc$.
We note $m_{\cmdc}$ and $m'_{\cmdc}$ respectively the last request sent and the first reply processed by the coordinator.
The steps $S_{\cmdc}$, $M_{\cmdc}$ and $R_{\cmdc}$ below define a partitioning of $\Gamma$.

\begin{itemize}
\item $S_{\cmdc}$ are the steps taken from announcing $c$ to the sending of $m_{\cmdc}$ at the coordinator.
    Formally, $S_{\cmdc} = (\Gamma|\coord(c)|\leq \send(m_{\cmdc}))$.
\item $R_{\cmdc}$ are the steps taken by $\coord(c)$ after receiving $m'_{\cmdc}$ until the announcement returns.
  In other words, $R_{\cmdc} = (\Gamma|\coord(c)|\geq \recv(n_{\cmdc}))$.
\item $M_{\cmdc}$ are the steps taken during the announcement of $c$ which are neither in $S_{\cmdc}$, nor in $R_{\cmdc}$.
  That is, $M_{\cmdc} = (\Gamma \setminus (S_{\cmdc} \union R_{\cmdc}))$.
\end{itemize}

Based upon this partitioning, the two lemmas that follow characterize the behavior of ROLL protocols during nice runs.
They are the basic building blocks of our two complexity results.

\begin{lemma}
  \lablem{roll:1}
  $\forall \lambda \in \nruns \ldotp \forall \cmdc \notin \lambda \ldotp \lambda S_{\cmdc} M_{\cmdc} R_{\cmdc} \in \nruns \land \latency(M_{\cmdc}) = 0$.
\end{lemma}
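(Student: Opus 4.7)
The plan is to extract a concrete witness for announcing $\cmdc$ via the Load Balancing property, and then use the moverness machinery of \reflem{tools:1} to massage the announce phase into the canonical shape $S_{\cmdc} M_{\cmdc} R_{\cmdc}$.

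Concretely, I would first apply Load Balancing with any choice of $Q \in \fquorums(\cmdc)$, obtaining a sequence $\lambda'$ such that $\lambda(\lambda'|\announce(\cmdc)) \in \nruns$ and $\proc(\lambda'|\announce(\cmdc)) = Q$. Writing $\Gamma = \lambda'|\announce(\cmdc)$, \reflem{roll:0} forces every maximal causal chain of $\Gamma$ to follow the pattern: $\inv$ at $\coord(\cmdc)$, coord-local steps, send of a request, receive at some $q \in Q$, local steps at $q$, send of a reply, receive at coord, coord-local steps, response. Reading off this pattern yields the partition $\Gamma = S_{\cmdc} \cup M_{\cmdc} \cup R_{\cmdc}$ announced before the statement, with two crucial structural consequences: $S_{\cmdc}$ contains no receive step (the only receives at coord in $\Gamma$ are receipts of replies, which by definition lie in $R_{\cmdc}$), and $R_{\cmdc}$ contains no send of a network message (by \reflem{roll:0} all network sends from coord happen in $S_{\cmdc}$).

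The core of the proof is to reshuffle the steps of $\Gamma$, however they are interleaved in $\lambda'$, into the order $S_{\cmdc} \cdot M_{\cmdc} \cdot R_{\cmdc}$ while preserving nice-run membership. I would proceed by a bubble-sort argument using \reflem{tools:1}: whenever an $M$-step $s$ appears immediately before an $S$-step $s'$, swap the pair; whenever an $R$-step $s$ appears immediately before an $M$-step $s'$, swap the pair. The first precondition of \reflem{tools:1}, $\proc(s) \neq \proc(s')$, holds because $S_{\cmdc}$ and $R_{\cmdc}$ live at $\coord(\cmdc)$ while the relevant $M_{\cmdc}$ steps live at $Q \setminus \{\coord(\cmdc)\}$. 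The second precondition --- that $s$ and $s'$ are not a $\send$--$\recv$ pair of the same message --- reduces directly to the two structural facts above: $S$-steps never receive, so the swap with a preceding $M$-step is legal, and $R$-steps never send a network message, so the swap with a following $M$-step is legal. Each swap preserves per-process projections, so by induction on the number of transpositions the reshuffled sequence remains indistinguishable from $\lambda \Gamma$ to every process, and therefore $\lambda S_{\cmdc} M_{\cmdc} R_{\cmdc} \in \nruns$.

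For the latency claim, I would observe that by \reflem{roll:0} every message exchange in $\Gamma$ travels between $\coord(\cmdc)$ and some responder; no message is ever sent from one responder to another during $\Gamma$. Consequently, any causal chain internal to $M_{\cmdc}$ is confined to a single responder's local trace and therefore contains no $\send$--$\recv$ pair, giving $\latency(M_{\cmdc}) = 0$. I expect the main obstacle to be the verification of the swap preconditions: carefully extracting from \reflem{roll:0} the facts that no $S$-step is a receive and no $R$-step is a network send is what makes the whole bubble-sort argument go through, and both facts ultimately rest on Optimal Latency forcing each causal path to contain exactly one request and one reply.
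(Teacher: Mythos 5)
Your proposal is correct and follows essentially the same route as the paper: Load Balancing supplies the witness run, \reflem{roll:0} (i.e., Optimal Latency) yields the structural facts that no step of $S_{\cmdc}$ receives a message sent elsewhere in the announcement and no step of $R_{\cmdc}$ sends one received in $M_{\cmdc}$, and the reordering is then justified by left-mover swaps. The paper simply invokes \refcor{tools:1} and \refcor{tools:2}, which are the packaged form of your bubble-sort over \reflem{tools:1}, and proves $\latency(M_{\cmdc})=0$ by the same "a third message would force three message delays" contradiction that underlies your confinement argument.
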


\begin{proof}
  Assume a command $\cmdc \notin \lambda$ and some fast path quorum $Q \in \fquorums(\cmdc)$.
  By Load Balancing, we may suffix $\lambda$ with $\lambda'$ such that
  \begin{inparaenumorig}[]
  \item $\announce(\cmdc)$ occurs in $\lambda'$,
  \item only the processes in $Q$ execute steps in $\lambda'|\announce(\cmdc)$,
  \item and $\lambda(\lambda'|announce(c))$ is a nice run.
  \end{inparaenumorig}
  From the definitions of $S_{\cmdc}$, $M_{\cmdc}$ and $R_{\cmdc}$, $\lambda'$ and $S_{\cmdc} M_{\cmdc} R_{\cmdc}$ contain the same steps.
  It remains to show that taking such steps in this order remains legal.
  
  Let us note $\Gamma =\lambda'|announce(c)$.
  Observe that none of the messages received in $S_{\cmdc}$ was sent in $\Gamma \setminus S_{\cmdc}$.
  Otherwise, by Optimal Latency, $\cmdc$ takes (at least) three message delays.
  Applying \refcor{tools:1}, $\lambda S_{\cmdc} (\Gamma \setminus S_{\cmdc}) \in \nruns$.
  In the same vein, none of the messages sent in $R_{\cmdc}$ is received in $M_{\cmdc} = (\Gamma \setminus S_{\cmdc}) \setminus R_{\cmdc}$.
  Therefore we have that $\lambda S_{\cmdc}((\Gamma \setminus S_{\cmdc})\setminus R_{\cmdc}) R_{\cmdc} \in \nruns$ by \refcor{tools:2}.
  By definition, $M_{\cmdc} = (\Gamma \setminus S_{\cmdc}) \setminus R_{\cmdc}$.

  Now, assume by contradiction that $\latency(M_{\cmdc}) > 0$.
  Let $m$ be the message exchanged during $M_{\cmdc}$.
  Since $\send(m)$ and $\recv(m)$ belongs to $\Gamma$, there exists a causal path $\mathscr(C)$ starting with $\invocation{\coordc}{\announce(\cmdc)}$ and ending  with $\response{\coordc}{\announce(\cmdc)}$ that contains these two steps.
  Applying \reflem{roll:0}, $\mathscr{C}$ is of the form $\invocation{\coordc}{\announce(\cmdc)} \cdot \rho \cdot \send(m) \cdot \recv(m) \cdot \rho' \cdot \send(m') \cdot \recv(m') \cdot \rho'' \cdot \responseVoid{\coordc}{\announce(\cmdc)}$.
  Message $m$ is neither a request, nor a reply, this implies that $\latency(\rho')=1$, and thus $\latency(\mathscr{C}) \geq 3$, contradicting Optimal Latency.
\end{proof}

\begin{lemma}
  \lablem{roll:2}
  $\forall \lambda \in \nruns \ldotp \forall P \subseteq \procSet \ldotp \lambda S_{\cmdc}(M_{\cmdc}|P) \in \nruns$
\end{lemma}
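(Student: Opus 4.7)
The plan is to leverage \reflem{roll:1}, which guarantees both that $\lambda S_{\cmdc} M_{\cmdc} R_{\cmdc} \in \nruns$ and that $\latency(M_{\cmdc}) = 0$. Our target $\lambda S_{\cmdc}(M_{\cmdc}|P)$ is obtained by reordering the steps of $M_{\cmdc}$ so that those performed by processes in $P$ are executed first, and then truncating. Concretely, I would aim to show the stronger statement that the reordered run $\lambda S_{\cmdc} (M_{\cmdc}|P) (M_{\cmdc}|\procSet \setminus P) R_{\cmdc}$ belongs to $\nruns$, and then obtain the conclusion as an immediate prefix by \reflem{tools:4}.

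First, I would establish the key enabling fact: no message received in $M_{\cmdc}$ is sent in $M_{\cmdc}$. Indeed, if a message $m$ were both sent and received within $M_{\cmdc}$, then the pair $\send(m)\,\recv(m)$ would constitute a causal path of latency at least one inside $M_{\cmdc}$, contradicting $\latency(M_{\cmdc}) = 0$. Consequently, for every pair $(s,s')$ with $s \in (M_{\cmdc}|\procSet \setminus P)$ occurring before $s' \in (M_{\cmdc}|P)$ in the original order, the two steps are by distinct processes (since $P$ and $\procSet \setminus P$ are disjoint), and they cannot form a send/receive pair.

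Next, I would apply \reflem{tools:1} in the style of a bubble-sort argument: iteratively swap each offending adjacent pair $(s,s')$ with $s \in (M_{\cmdc}|\procSet \setminus P)$ immediately preceding some $s' \in (M_{\cmdc}|P)$. Each swap yields again a nice run by \reflem{tools:1}, and after finitely many swaps the $M_{\cmdc}$ block is reordered as $(M_{\cmdc}|P)\,(M_{\cmdc}|\procSet \setminus P)$, leaving $S_{\cmdc}$ and $R_{\cmdc}$ untouched. Thus $\lambda S_{\cmdc} (M_{\cmdc}|P)(M_{\cmdc}|\procSet \setminus P) R_{\cmdc} \in \nruns$. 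Finally, by \reflem{tools:4}, any prefix of a nice run is a nice run, so $\lambda S_{\cmdc}(M_{\cmdc}|P) \in \nruns$, which is the desired conclusion.

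The only subtle point — and I would expect it to be the main thing to get right, rather than a real obstacle — is verifying the hypotheses of \reflem{tools:1} at each swap: disjointness of processes is immediate from $P \cap (\procSet \setminus P) = \emptyset$, and the absence of a send/receive relationship between the two steps follows directly from $\latency(M_{\cmdc}) = 0$ established in \reflem{roll:1}. Once these conditions are checked, the rest of the argument is purely mechanical reordering followed by taking a prefix.
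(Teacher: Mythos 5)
Your proof is correct and follows essentially the same route as the paper: invoke \reflem{roll:1} to obtain $\lambda S_{\cmdc} M_{\cmdc} R_{\cmdc} \in \nruns$ with $\latency(M_{\cmdc})=0$, commute the steps of $M_{\cmdc}|P$ to the front, and conclude by prefix-closure (\reflem{tools:4}). The only cosmetic difference is that you re-derive the commutation by repeated application of \reflem{tools:1}, whereas the paper invokes \refcor{tools:1}, which packages exactly that bubble-sort argument.
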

\begin{proof}
  We start by picking $c\in \cmdSet$ such that $c \notin \lambda$. This allows us to apply \reflem{roll:1} to achieve the following: $\lambda S_{\cmdc} M_{\cmdc} R_{\cmdc} \in \nruns$.
  Applying \reflem{tools:4}, we have that: $\lambda S_{\cmdc} M_{\cmdc} \in \nruns$.
  Then, let $X = M_{\cmdc}|P$, we know then that none of the messages received in $X$ were sent in $M_{\cmdc} \setminus X$ (by \reflem{roll:1}, $\latency(M_{\cmdc}) = 0$).
  Therefore, we can apply \refcor{tools:1} to obtain $\lambda S_{\cmdc} X(M_{\cmdc} \setminus X) \in \nruns$.
  Finally, applying \reflem{tools:4} gives us: $\lambda S_{\cmdc} (M_{\cmdc}|P) \in \nruns$. 
\end{proof}

\subsection{Proof}
\labappendix{roll:proof}

We now proceed to proving the ROLL theorem with the above two technical lemmas.
The proof follows the sketch depicted in \refsection{roll}.

\roll*

\begin{proof}
  By contradiction, assume a ROLL protocol that satisfies $2F+f-1 \leq n$.
  Let $\cmd{1}$ and $\cmd{2}$ be two conflicting commands in $\cmdSet$.
  
  Define a partitioning $P_1$, $P_2$ and $Q$ of $\procSet$, the set of processes, such that
  \begin{inparaenum}
  \item $P_1 \inter P_2 = \emptySet$;
  \item $Q = \procSet (\setminus P_1 \union P_2)$;
  \item $\cardinalOf{P_1} = \cardinalOf{P_2} = F-1$; and
  \item $|Q| = n-2(F-1)$.
  \end{inparaenum}

  The CAP impossibility result \cite{cap} tells us that $2F < n$.
  As a consequence, there exist at least two distinct processes $p_1$ and $p_2$ in $Q$.
  Let $Q_1 = P_1 \inter Q \setminus\{p_2\}$ and $Q_2 = P_2 \union Q \setminus \{p_1\}$.

  By applying \reflem{roll:1} to the initial state, $\run_1 = S_{1} M_{1} R_{1} \in \nruns$, 
  By Optimal Latency, the fast path is taken in $\run_1$ and $\deps(\cmd{1})=\emptySet$ holds at $p_1$ at the end of the run.
  Symmetrically, we define $\run_2 = S_{2} M_{2} R_{2} \in \nruns$.

  By applying \reflem{roll:2} to $\run_1$, $\lambda' = S_{1} (M_{1}|P_1) \in \nruns$.
  In a symmetrical manner, $\lambda'' = S_{2} (M_{2}|P_2) \in \nruns$.  
  Observe that $proc(\lambda')\inter \proc(\lambda'') = \emptySet$.
  Thus, by \reflem{tools:flp}, we obtain $S_{1} (M_{1}|P_1) S_{2} (M_{2}|P_2) \in \nruns$.
  From \refcor{tools:1}, $\sigma = S_{1} S_{2} (M_{2}|P_2) (M_{1}|P_1) \in \nruns$.

  Let $\hat{\sigma}$ be an infinite suffix of $\sigma$ in which all the processes in $Q$ are faulty.
  Such a suffix exist because we consider realistic failure detectors.
  As $\cardinalOf{Q} \leq f$, $\cmd{1}$ and $\cmd{2}$ are eventually stable by Reliability.
  Let $\sigma'$ be the shortest suffix of $\hat{\sigma}$ for which this is true at some process $p$.
  We define $\run_3 = \sigma \sigma'$.
  By construction, $\proc(\sigma') \subseteq P_1 \union P_2$.

  In what follows, we construct a fourth run, $\run_4$.
  We will show that $\run_4$ is indistinguishable from $\run_3$ to $p_2$, while at the same time being indistinguishable from $\run_1$ to $p_1$.
  This implies that the same decision about $\deps(\cmd{1})$ at $p_1$ in $\run_1$ is taken by $p$ in $\run_3$.

  By \reflem{roll:1} applied to $\lambda''$, $S_{2}(M_{2}|P_2) S_{1}M_{1}R_{1} \in \nruns$.
  Then, partitioning $M_{1}$ using \reflem{roll:2}, $S_{2} (M_{2}|P_2) S_{1} (M_{1}|P_1) (M_{1}|Q^* \union \{p_1\}) R_{1} \in \nruns$.
  By applying \refcor{tools:1}, we obtain $\sigma (M_{1}|Q^* \union \{p_1\}) R_{1} \in \nruns$.
  
  Let $\run_4$ be a run with the same failure pattern and failure detector history as $\run_3$, and in which the steps $\sigma (M_{1}|Q^* \union \{p_1\}) R_{1} \sigma'$ are taken.
  Since $Q \inter (P_1 \union P_2) = \emptySet$, $\run_4$ is not distinguishable from $\sigma (M_{1}|Q^* \union \{p_1\}) R_{1}$ to $Q$.
  Similarly, $\run_4$ is not distinguishable from $\sigma \sigma'$ to $P_1 \union P_2$.
  Thus $\run_4$ is well-formed and a run of the protocol.
  
  \begin{claim}
    \labclaim{roll:1}
    $\run_1|\{p_1\} = \run_4|\{p_1\}$
  \end{claim}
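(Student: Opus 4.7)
The plan is to directly compute both projections $\run_1 | \{p_1\}$ and $\run_4 | \{p_1\}$ and verify they coincide. Recall that $\run_1 = S_1 M_1 R_1$ and that $\run_4$ consists of the sequence $S_1 \, S_2 \, (M_2|P_2) \, (M_1|P_1) \, (M_1|Q^* \cup \{p_1\}) \, R_1 \, \sigma'$. My approach is to apply the projection operator segment by segment, relying only on the definitions of $S_c$, $M_c$, $R_c$ together with the membership of $p_1$ in the various quorums and sub-quorums fixed at the start of the main proof.

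For $\run_1 | \{p_1\}$ the computation is immediate: since $p_1 = \coord(\cmd{1})$, the definitions give that $S_1$ and $R_1$ consist entirely of steps at $p_1$, so the projection reduces to $S_1 \cdot (M_1 | \{p_1\}) \cdot R_1$. For $\run_4 | \{p_1\}$ I would walk through each segment in turn. The initial $S_1$ projects to itself, for the same reason as above. The segment $S_2$ projects to $\epsilon$ since it consists solely of steps at $\coord(\cmd{2}) = p_2 \neq p_1$. Both $(M_2 | P_2)$ and $(M_1 | P_1)$ project to $\epsilon$ at $p_1$ because by construction $P_1, P_2 \subseteq \procSet \setminus Q$ whereas $p_1 \in Q$. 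The middle segment $(M_1 | Q^* \cup \{p_1\})$ projects precisely to $M_1 | \{p_1\}$, since $p_1$ belongs to the restricting set. Then $R_1$ projects to itself. Finally, $\sigma' | \{p_1\} = \epsilon$: by the construction of $\run_3$, the processes of $Q$ (and in particular $p_1$) crash before $\sigma'$ begins, whence $\proc(\sigma') \subseteq P_1 \cup P_2$ and $p_1 \notin P_1 \cup P_2$. Concatenating these partial projections yields $\run_4 | \{p_1\} = S_1 \cdot (M_1 | \{p_1\}) \cdot R_1$, which is exactly $\run_1 | \{p_1\}$.

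This is essentially a syntactic bookkeeping argument and I do not foresee a substantive obstacle; the genuine difficulty of the theorem lies in \emph{constructing} the run $\run_4$ (done just above the claim via \reflemtwo{roll:1}{roll:2} and the commutation corollaries), not in observing its indistinguishability from $\run_1$ to $p_1$. The only point worth double-checking is that the step sequences $S_1, M_1, R_1$ appearing inside $\run_4$ are literally the same objects as in $\run_1$, which is ensured by the fact that $\run_4$ is obtained by reordering and splicing these very sequences together with $S_2, (M_2|P_2)$ and a tail $\sigma'$ of $Q$-free steps.
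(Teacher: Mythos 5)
Your proposal is correct and follows exactly the paper's argument, which condenses the same segment-by-segment projection into the single chain $\run_1|p_1 = (S_1 M_1 R_1)|\{p_1\} = S_1 (M_1|\{p_1\}) R_1 = \run_4|\{p_1\}$. Your explicit bookkeeping (each of $S_2$, $(M_2|P_2)$, $(M_1|P_1)$, $\sigma'$ projecting to the empty sequence because $p_1 \in Q$ while those segments involve only $p_2$, $P_1$, $P_2$) is precisely what the paper leaves implicit, and your closing caveat about the spliced sequences being the same objects is indeed what \reflemtwo{roll:1}{roll:2} and the corollaries guarantee.
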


  \begin{proof}
    $\run_1|p_1 = (S_1 M_1 R_1) | \{p_1\} = S_1 (M_{1}| \{p_1\}) R_{1} = \run_4 | \{p_1\}$
  \end{proof}

  \begin{claim}
    \labclaim{roll:2}
    $\run_4|p = \run_3|p$
  \end{claim}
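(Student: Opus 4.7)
The plan is to prove the claim by a direct structural argument: the block of steps inserted in $\run_4$ relative to $\run_3$ involves no step of $p$, so projecting both runs onto $p$ yields the same sequence. Recall that $\run_3 = \sigma \sigma'$ while $\run_4 = \sigma (M_{1}|Q^* \union \{p_1\}) R_{1} \sigma'$, so the two runs differ only by the inserted block $(M_{1}|Q^* \union \{p_1\}) R_{1}$.

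Next I would unfold the projection using distributivity over concatenation,
\[
\run_4|p = (\sigma|p) \cdot ((M_{1}|Q^* \union \{p_1\})|p) \cdot (R_{1}|p) \cdot (\sigma'|p),
\]
and argue that the two middle factors are empty. Since $p \in P_1 \union P_2$ and the partition of $\procSet$ satisfies $Q \inter (P_1 \union P_2) = \emptySet$, the set $Q^* \union \{p_1\} \subseteq Q$ does not contain $p$, so $(M_{1}|Q^* \union \{p_1\})|p = \epsilon$. Likewise $R_{1}$ is composed only of steps by the coordinator $p_1 \in Q$, hence $R_{1}|p = \epsilon$. Combining these equalities yields $\run_4|p = (\sigma|p)(\sigma'|p) = (\sigma \sigma')|p = \run_3|p$.

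The one subtlety I would spell out before closing is that the literal sequence $\sigma'$ can indeed be reused as the tail of $\run_4$: since $p$ takes no step in the inserted block, its local state coincides in both runs at the start of $\sigma'$, and every message that $p$ consumes in $\sigma'$ was already present in its buffer at the end of $\sigma$ and has not been removed in between. Any additional messages addressed to $p$ that $(M_{1}|Q^* \union \{p_1\}) R_{1}$ may deposit can simply be left unread in the buffer throughout $\sigma'$, which the message-buffer semantics allows. I do not anticipate any significant obstacle beyond stating this observation cleanly.
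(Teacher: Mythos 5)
Your proposal is correct and follows essentially the same route as the paper: the paper's one-line proof $\run_4|p = (M_2|p)(M_1|p)(\sigma'|p) = \run_3|p$ is exactly the projection-distributes-over-concatenation argument you give, using $Q \inter (P_1 \union P_2) = \emptySet$ to see that the inserted block $(M_{1}|Q^* \union \{p_1\}) R_{1}$ contains no step of $p$. The well-formedness subtlety you flag at the end is handled in the paper just before the claim, where $\run_4$ is shown to be an admissible run via indistinguishability to $Q$ and to $P_1 \union P_2$.
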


  \begin{proof}
    $\run_4|p = (M_2|p) (M_1|p) (\sigma'|p) = \run_3|p$
  \end{proof}

  \refclaim{roll:1} implies that $\cmd{1}$ is stable at $p_1$ in $\run_4$ and that $deps(\cmd{1}) = \emptySet$.
  Similarly, \refclaim{roll:2} leads to $\cmd{1}$ stable at $p$ in $\run_4$.
  In addition, the value of $deps(\cmd{1})$ at $p$ in $\run_4$ is the same as in $\run_3$.
  By the Stability property of Leaderless SMR, $deps(\cmd{1}) = \emptySet$ at $p$ in $\run_3$.

  A symmetric argument to the one above can be made using run $\run_2$ and a run $\run_5$.
  This leads to the conclusion that $p$ decides $deps(\cmd{2}) = \emptySet$ in $\run_3$.

  From what precedes, run $\run_3$ contradicts the Consistency property of Leaderless SMR.

\end{proof}

\section{Chaining effect}
\labappendix{chaining}

This section shows how a chaining effect may occur in ROLL-optimal protocols.
To establish this result, we construct a $\async{2}$ nice run with a pending chain of $n$ commands.
The run is built inductively starting from a solo run during which a single command is submitted.

At coarse grain, our construction works as follows.
Let $\sigma_i$ be a run with a chain of size $i$ and some pending command $\cmdi$.
We extend $\sigma_i$ with the partial announcement of a new command $\cmdin$.
Then, we take the decision for $\cmdi$ to obtain $\sigma_{i+1}$.
In $\sigma_{i+1}$, command $\cmdin$ is pending, yet $\coord(i)$ does not know if it is committed or not.
We argue that in the latter case, it could have missed $\cmdi$.
Thus, $\coord(i)$ must add $\cmdin$ to the dependencies of $\cmdi$, forming a new chain of size $i+1$.

\subsection{An Inductive Reasoning}
\labappendix{chaining:induction}

We build our result inductively using a family of $k$ distinct commands $(\cmdi)_{i \in [1,k]}$.
Each command is associated with a nice run ($\sigma_i$), a fast path quorum ($Q_i$), a subset of $f-1$ processes ($P_i$), and a process ($q_i$).
We shall establish that at rank $i$ the following property holds.

\begin{displaymath}
  \begin{array}{l@{~}l@{~}l}
    \mathfrak{P}(i) & \equaldef & \exists \sigma_i, Q_i, P_i, q_i, \Gamma_i, \Gamma'_i \ldotp \\
    && \land~ \sigma_{i} = \Gamma_i S_i (M_i|P_i) \Gamma'_i (M_i| Q_i \setminus P_i) R_i \\
    && \land~ \proc(\Gamma_i') \inter Q_i = P_i \\
    && \land~ (S_i (M_i | P_i) (\Gamma_i'| \procSet \setminus \{q_i\})) \lmove (\Gamma_i' | q_i) \\
    && \land~ \sigma_{i} \vdash \cmdi \cmdip \cdots \cmd{1} \\
    && \land~ \pending(\sigma_i) = \emptySet \\
    && \land~ \forall \rho \in \cpaths(\Gamma'_i) \ldotp \cardinalOf{\rho} \leq 1 \\
    && \land~ \sigma_i \in \async{2}
  \end{array}
\end{displaymath}

In the above definition, the first two clauses give the general form of $\sigma_i$.
The third clause indicates that the steps $(\Gamma_i' | q_i)$ left-move with $S_i (M_i | P_i)$.
As we shall see later, these steps are taken by the coordinator of the next command (i.e., $\cmdin$).
The fourth clause requires that a chaining effect occurs with prior commands.
The fifth and sixth clause upper-bound the asynchrony in $\sigma_i$.
They are used to show by induction that the run is \async{2} (last clause).

The remaining of this section is devoted to showing that $\mathfrak{P}(k)$ holds.
From which we may deduce the following theorem (in \refsection{chaining}).

\chaining*

\begin{proof}
  The formal statement of the theorem is:
  $
  \forall k>0 \ldotp \exists \lambda \in \async{2} \ldotp \exists (\cmdi)_{i \in [1,n]} \subseteq \cmdSet \ldotp
  \lambda \vdash \cmd{k} \cmd{k-1} \cdots \cmd{1} \land \cmd{n} \notin \sCommit
  $.
  To show this, let $\sigma_{k}$ be the $\async{2}$ run given by $\Gamma_k S_k (M_k|P_k) \Gamma'_k (M_k| Q_k \setminus P_k) R_k$, by applying $\mathfrak{P}(k)$.
  Consider its prefix $\Gamma_k S_k (M_k|P_k) \Gamma'_k$.
  Command $\cmd{k}$ is not committed in this run yet.
  Moreover, by $\mathfrak{P}(k)$, this run contains a chain $\cmd{k} \cmd{k-1} \cdots \cmd{1}$.
\end{proof}

\subsection{Preliminaries}
\labappendix{chaining:preliminaries}

For $i=1$, we chose $Q_1$ as any set of $(n-F)$ processes in $\procSet$.
$P_1$ is any subset of $f-1$ processes in $Q_1$.
This set is constructable since by ROLL-optimality $n-F-(f-1) = F-2$ and by CAP $F>1$.
$\coord(c_1)$ is any process in $Q_1 \setminus P_1$.
Process $q_1$ is chosen outside of $Q_1$.
Applying \reflem{roll:2} then \reflem{roll:1} leads to $S_1(M_1|P_1)(M_1|Q_1 \setminus P_1)R_1$.
This gives us immediately $\mathfrak{P}(0)$ where both $\Gamma_1$ and $\Gamma_1'$ are empty.

\reffigure{assump1} illustrates how we construct the quorum and process variables at rank $i+1$ from the ones at rank $i$.
Such a construction ensures the following list of facts that are used to establish $\mathfrak{P}(i>1)$.
\begin{enumerate}[label=({F\arabic*})]
\item \labfact{qf:1} $\coordin = q_i$
\item \labfact{qf:2} $P_{i+1} = Q_{i+1} \inter Q_{i}$
\item \labfact{qf:3} $P_{i+1} \inter P_i = \emptySet$
\item \labfact{qf:4} $q_{i+1} \notin Q_{i+1} \union \{ \coordi \}$.
\item \labfact{qf:5} $(\coordip \union P_{i+1}) \inter (\coordi \cup P_i) = \emptySet$
\end{enumerate}

In detail, we build $Q_{i+1}$ as $P_{i+1} \union \hat{P}_{i+1} \union \{\coordip\}$, where:
\begin{inparaenumorig}[(a)]
\item $\hat{P}_{i+1}$ are $n-F-f+1$ processes outside of $Q_i$;
\item $P_{i+1}$ are $f-1$ processes picked in $Q_i \setminus (P_i \union \coordi)$; and
\item $\coordin$ is set to $q_i$.
\end{inparaenumorig}%
Then, process $q_{i+1}$ is chosen in $Q_i$ outside of $P_{i+1} \union \{\coord{i}\}$.

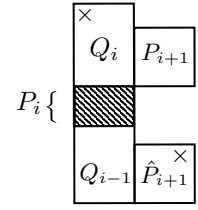
\begin{wrapfigure}{r}{0.3\textwidth}
  \centering

\tikzset{
pattern size/.store in=\mcSize, 
pattern size = 5pt,
pattern thickness/.store in=\mcThickness, 
pattern thickness = 0.3pt,
pattern radius/.store in=\mcRadius, 
pattern radius = 1pt}
\makeatletter
\pgfutil@ifundefined{pgf@pattern@name@_i1ytpxufr}{
\pgfdeclarepatternformonly[\mcThickness,\mcSize]{_i1ytpxufr}
{\pgfqpoint{0pt}{-\mcThickness}}
{\pgfpoint{\mcSize}{\mcSize}}
{\pgfpoint{\mcSize}{\mcSize}}
{
\pgfsetcolor{\tikz@pattern@color}
\pgfsetlinewidth{\mcThickness}
\pgfpathmoveto{\pgfqpoint{0pt}{\mcSize}}
\pgfpathlineto{\pgfpoint{\mcSize+\mcThickness}{-\mcThickness}}
\pgfusepath{stroke}
}}
\makeatother
\tikzset{every picture/.style={line width=0.75pt}} 

\begin{tikzpicture}[x=0.75pt,y=0.75pt,yscale=-1,xscale=1]

\draw   (80.35,150.72) -- (80.24,89.13) -- (110.29,89.07) -- (110.39,150.67) -- cycle ;
\draw   (80.6,189.47) -- (80.49,130.72) -- (110.54,130.67) -- (110.64,189.42) -- cycle ;
\draw   (110.54,130.67) -- (110.49,101.22) -- (140.54,101.16) -- (140.59,130.62) -- cycle ;
\draw  [pattern=_i1ytpxufr,pattern size=2.25pt,pattern thickness=0.75pt,pattern radius=0pt, pattern color={rgb, 255:red, 0; green, 0; blue, 0}] (80.78,150.72) -- (80.75,130.72) -- (110.28,130.67) -- (110.31,150.67) -- cycle ;
\draw   (110.64,189.42) -- (110.59,159.96) -- (140.64,159.91) -- (140.69,189.37) -- cycle ;

\draw (95.6,112) node    {$Q_{i}$};
\draw (96.47,175) node  [font=\small]  {$Q_{i-1}$};
\draw (126.17,114.67) node  [font=\small]  {$P_{i+1}$};
\draw (125.64,174.66) node  [font=\small]  {$\hat{P}_{i+1}$};
\draw (85.9,95) node  [font=\small]  {$\times $};
\draw (133.6,165.5) node  [font=\small]  {$\times $};
\draw (69.2,141) node    {$\big\{ $};
\draw (58,139) node    {$P_{i}$};

\end{tikzpicture}
  \caption{The symbol $\times$ represents coordinator placement}
  \labfigure{assump1}
\end{wrapfigure}

Let us establish the correctness of this construction.
For starter, defining $\hat{P}_{i+1}$ is possible.
Indeed, there are $F$ processes outside of $Q_i$ from which we need $n-F-f+1$ of them.
By ROLL-optimality, $F-(n-F-f+1) = -n+2F+f-1 \geq 0$.
Then, building $P_{i+1}$ requires $f-1$ processes in $Q_i \setminus (P_i \union \coordi)$.
We have $n-F=-1+f+F$ by ROLL and by CAP $F>1$.
Thus, there are enough processes.
The above reasoning also tells us that we may pick $q_{i+1}$ outside of $P_{i+1} \union \{\coord{i}\}$ in in $Q_i$.
Then,
\begin{inparaenumorig}[(a)]
\item \reffact{qf:1} is true by construction;
\item $Q_{i+1}$ is defined as $P_{i+1} \union \hat{P}_{i+1} \union \{\coordip\}$.
  $\hat{P}_{i+1}$ does not intersect with $Q_i$.
  At rank $i$, $q_i \notin Q_i$ and $\coordin = q_i$.
  Thus, $\coordin \notin Q_i$.
  This establishes \reffact{qf:2};
\item By construction, we have \reffact{qf:3}; and
\item $q_{i+1}$ is chosen in $Q_i$ outside of $P_{i+1} \union \{\coord{i}\}$.
  Hence, it is not in $\{Q_{i+1} \union \{\coordi\}$, giving us \reffact{qf:4}.
\item \reffact{qf:5} follows from the conjunction of \reffact{qf:3} and \reffact{qf:3}.
\end{inparaenumorig}

\subsection{Inductive step: $\mathfrak{P}(i) \implies \mathfrak{P}(i+1)$}
\labappendix{chaining:step}

\subsubsection{Construction}
\labappendix{chaining:step:construction}

Let us consider the prefix $\Gamma_i S_i (M_i | P_i) \Gamma_i'$ of $\sigma_i$.
Applying \reflem{roll:2}, we obtain the following nice run: $\Gamma_i S_i (M_i | P_i) \Gamma_i' S_{i+1} (M_{i+1}|P_{i+1})$.
As a consequence, from $\Gamma_i S_i (M_i | P_i)  \Gamma_i'$ one may continue into either $(M_i |Q_i \ P_i) R_i$ or $S_{i+1} (M_{i+1} | P_{i+1})$.
By \reffact{qf:5} we know that the sets $\proc((M_i | Q_i \setminus P_i) R_i)$ and $\proc(S_{i+1} (M_{i+1} | P_{i+1}))$ do not intersect.
Applying \reflem{tools:flp}, we obtain the run: $\Gamma_i S_i (M_i | P_i)  \Gamma_i' S_{i+1}(M_{i+1} | P_{i+1}) (M_i | Q_i \setminus P_i) R_i$.

Next, observe that from  $\Gamma_i S_i (M_i | P_i) \Gamma_i'S_{i+1} (M_{i+1}| P_{i+1})$ both $(M_i | Q_i \setminus P_i) R_i$ and $(M_{i+1} | Q_{i+1} \setminus P_{i+1}) R_{i+1}$ are possible.
From \reffact{qf:1} and \reffact{qf:5}, the sets $(Q_i \setminus P_i) \union \{ \coordi \}$ and  $(Q_{i+1} \setminus P_{i+1}) \union \{ \coordin \}$ are disjoint.
By \reflem{tools:flp}, $\sigma_{i+1} = \Gamma_{i+1} S_{i+1} (M_{i+1} | P_{i+1}) \Gamma'_{i+1} (M_{i+1} | Q_{i+1} \setminus P_{i+1}) R_{i+1}$ is a nice run , where
\begin{inparaenumorig}[]
\item $\Gamma_{i+1} = \Gamma_i S_i (M_i | P_i) \Gamma'_i$, and
\item $\Gamma'_{i+1} =  (M_i | Q_i \setminus P_i) R_i$.
\end{inparaenumorig}

\subsubsection{Correctness}
\labappendix{chaining:step:correctness}

The claims that follow establish the correctness of the above construction with respect to the properties at rank $i+1$.

\begin{claim}
  \labclaim{chaining:1}
  $\proc(\Gamma_{i+1}') \inter Q_{i+1} = P_{i+1}$
\end{claim}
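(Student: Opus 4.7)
The plan is to reduce this claim to a short set-theoretic calculation, once $\proc(\Gamma'_{i+1})$ has been identified explicitly from the construction. By definition $\Gamma'_{i+1} = (M_i \mid Q_i \setminus P_i) R_i$, so I would first determine which processes take steps in this sequence.

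For the block $R_i$, only $\coordi$ takes steps, by the very definition of the partition $S_i M_i R_i$ of an announcement, and the inductive coordinator placement ensures $\coordi \in Q_i \setminus P_i$. For the block $(M_i \mid Q_i \setminus P_i)$, \reflem{roll:0} together with the Load Balancing property guarantees that $M_i$ contains at least one step of every non-coordinator participant of the fast path quorum $Q_i$; projecting onto $Q_i \setminus P_i$ therefore yields exactly $(Q_i \setminus P_i) \setminus \{\coordi\}$. Taking the union of the contributing processes gives $\proc(\Gamma'_{i+1}) = Q_i \setminus P_i$.

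I would then conclude by a direct calculation that uses only the quorum facts established in the preceding subsection:
\[
  \proc(\Gamma'_{i+1}) \inter Q_{i+1}
  = (Q_i \setminus P_i) \inter Q_{i+1}
  = (Q_i \inter Q_{i+1}) \setminus P_i
  = P_{i+1} \setminus P_i
  = P_{i+1},
\]
where the third equality is \reffact{qf:2} and the last equality is \reffact{qf:3}.

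The only potentially delicate step is the identification of $\proc(M_i)$ with $Q_i \setminus \{\coordi\}$: one has to check that every non-coordinator member of the fast path quorum really takes at least one step strictly between $S_i$ and $R_i$. This follows from the two-message-delay shape of a ROLL announcement given by \reflem{roll:0}, combined with the fact that Load Balancing lets us designate the entire fast path quorum $Q_i$ as the set of participants of $\announce(c_i)$, so none of them can be absent from the middle block. Everything else is bookkeeping on the quorum structure.
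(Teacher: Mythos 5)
Your proof is correct and follows essentially the same route as the paper: identify $\proc(\Gamma'_{i+1})$ as $Q_i \setminus P_i$ from the construction of $\Gamma'_{i+1} = (M_i \mid Q_i \setminus P_i)\, R_i$, then intersect with $Q_{i+1}$ using \reffact{qf:2} and the disjointness of $P_{i+1}$ from $P_i \cup \{\coordi\}$ (the paper cites \reffact{qf:5}, which subsumes your \reffact{qf:3}). Your explicit justification via Load Balancing that every non-coordinator member of $Q_i$ actually takes a step in $M_i$ — needed for the inclusion $P_{i+1} \subseteq \proc(\Gamma'_{i+1})$ — is a useful filling-in of what the paper dismisses as ``by construction.''
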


\begin{proof}
  By construction, $\proc(\Gamma_{i+1}') = (Q_{i} \setminus P_i) \union \coordi$.
  On the other hand by \reffact{qf:2},  $P_{i+1} = Q_{i} \inter Q_{i+1}$.
  The result follows from \reffact{qf:5}.
\end{proof}

\begin{claim}
  \labclaim{chaining:2}
  $(S_{i+1} (M_{i+1}|P_{i+1}) (\Gamma_{i+1}' | \procSet \setminus \{q_{i+1}\})) \lmove (\Gamma_{i+1}'|q_{i+1})$
\end{claim}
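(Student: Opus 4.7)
The plan is to prove the claim by invoking \reflem{tools:1} repeatedly so as to commute each step of $\Gamma_{i+1}'|q_{i+1}$ leftward past every step of the block $B = S_{i+1}(M_{i+1}|P_{i+1})(\Gamma_{i+1}'|\procSet\setminus\{q_{i+1}\})$. Each such swap demands (a) the two steps are taken by different processes, and (b) no send-receive pair is broken by the swap. Condition (a) holds because $\coord(c_{i+1}) = q_i \neq q_{i+1}$ by \reffact{qf:4}, $q_{i+1}\notin P_{i+1}$ by construction, and the rightmost factor of $B$ explicitly excludes $q_{i+1}$.

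For condition (b), I split on the location of $q_{i+1}$. If $q_{i+1}\in P_i$, then $q_{i+1}\notin Q_i\setminus P_i$ and $q_{i+1}\neq \coord(c_i)$, so $\Gamma_{i+1}'|q_{i+1}=\epsilon$ by the definition $\Gamma_{i+1}'=(M_i|Q_i\setminus P_i)R_i$, and the claim holds vacuously. Otherwise $q_{i+1}\in(Q_i\setminus P_i)\setminus\{\coord(c_i)\}$ and $\Gamma_{i+1}'|q_{i+1}=M_i|q_{i+1}$. By \reflem{roll:0} applied to the causal path of $\announce(c_i)$, the only messages received by $q_{i+1}$ inside $M_i$ are requests dispatched by $\coord(c_i)$ during $S_i \subseteq \Gamma_{i+1}$; in particular, none is sent inside $B$.

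In the reverse direction, I must rule out that any send inside $B$ matches a receive of $q_{i+1}$ inside $\Gamma_{i+1}'|q_{i+1}$. Messages sent in $S_{i+1}$ are addressed to members of $Q_{i+1}$, and $q_{i+1}\notin Q_{i+1}$ by \reffact{qf:4}. Messages sent in $(M_{i+1}|P_{i+1})$ are replies addressed to $\coord(c_{i+1})$. By \reflem{roll:0}, the sends inside $(\Gamma_{i+1}'|\procSet\setminus\{q_{i+1}\})$ are replies from quorum members of $M_i$ to $\coord(c_i)$ (the factor $R_i$ being receive-only). None of these messages is destined for $q_{i+1}$, so condition (b) holds for every required swap and iterated application of \reflem{tools:1} yields the claim.

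The main subtlety worth handling carefully is that the identity of the specific message consumed at each recv step of $\Gamma_{i+1}'|q_{i+1}$ coincides in $\sigma_{i+1}$ with the one consumed at the corresponding step of $\sigma_i$. This is precisely because splicing $S_{i+1}(M_{i+1}|P_{i+1})$ into $\sigma_i$ only enlarges the message buffer with fresh messages whose destinations lie in $Q_{i+1}\union\{\coord(c_{i+1})\}$, a set disjoint from $\{q_{i+1}\}$, so no pre-existing delivery scheduled at $q_{i+1}$ inside $M_i$ can be diverted by the insertion.
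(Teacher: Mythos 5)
Your proof is correct and follows essentially the same route as the paper's: identify $(\Gamma_{i+1}'\mid q_{i+1})$ with (a sub-sequence of) $M_i\mid q_{i+1}$, then verify the two left-mover conditions of \reflem{tools:1} using the request/reply structure of announcements (\reflem{roll:0}, \reflem{roll:1}) together with \reffact{qf:4}. You are in fact somewhat more careful than the paper, which silently equates $\Gamma'_{i+1}\mid q_{i+1}$ with $M_i\mid q_{i+1}$ without addressing the possibility $q_{i+1}\in P_i$ (where the projection is empty and the claim is vacuous) and does not spell out the reverse direction that no send in the left block is addressed to $q_{i+1}$.
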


\begin{proof}
  By construction, $\Gamma'_{i+1} = (M_i | Q_i \setminus P_i) R_i$.
  Applying \reffact{qf:4}, $q_{i+1} \notin Q_{i+1} \union \{\coordi\}$.
  Hence, $(\Gamma'_{i+1} | q_{i+1}) = (M_i | q_{i+1})$.
  Moreover, these steps left-move with
  \begin{inparaenumorig}[]
  \item $ (\Gamma_{i+1}' | \procSet \setminus \{q_{i+1}\})$ by \reflem{roll:1} and 
  \item with $(S_{i+1} (M_{i+1}|P_{i+1})$ since $q_{i+1} \notin Q_{i+1}$. 
  \end{inparaenumorig}
\end{proof}

\begin{claim}
  \labclaim{chaining:3}
  $\sigma_{i+1} \yield \cmdin \rightarrow \cmdi$
\end{claim}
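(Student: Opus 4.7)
The plan is to show that $\sigma_{i+1}$ extends the chain provided by $\mathfrak{P}(i)$ with a new link between $\cmdin$ and $\cmdi$. Since $\sigma_{i+1}$ retains every step of $\sigma_i$, only interleaving new $\cmdin$-related steps into the existing timeline, each dependency witnessing the chain $\cmdi \cmdip \cdots \cmd{1}$ is preserved: the blocks $(M_i | Q_i \setminus P_i) R_i$ still appear in $\sigma_{i+1}$ and produce the same committed dependencies as in $\sigma_i$, possibly enlarged with $\cmdin$. Hence, the essential task is to prove the new link $\cmdin \in \deps(\cmdi)$ in $\sigma_{i+1}$.

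Both $\cmdi$ and $\cmdin$ are committed in $\sigma_{i+1}$, at steps $R_i$ and $R_{i+1}$ respectively, and they conflict by construction. The Consistency property of Leaderless SMR therefore yields $\cmdin \in \deps(\cmdi) \lor \cmdi \in \deps(\cmdin)$. The plan is to rule out the second disjunct. By facts (F1)--(F5), the coordinator of $\cmdin$ is $q_i$ with $q_i \notin Q_i$, and the fast path quorum decomposes as $Q_{i+1} = P_{i+1} \cup \hat{P}_{i+1} \cup \{q_i\}$ with $\hat{P}_{i+1} \cap Q_i = \emptyset$ and $P_{i+1} \subseteq Q_i \setminus P_i$. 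In $\sigma_{i+1}$, the block $(M_{i+1} | P_{i+1})$ occurs strictly before $(M_i | Q_i \setminus P_i)$, so processes in $P_{i+1}$ reply to the request for $\cmdin$ without having received the request for $\cmdi$; processes in $\hat{P}_{i+1}$ never receive it; and $q_i$ itself takes no step belonging to the announcement of $\cmdi$.

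To turn these structural observations into the formal conclusion $\cmdi \notin \deps(\cmdin)$, the idea is to build a companion nice run $\mu$, indistinguishable from $\sigma_{i+1}$ to $q_i$ over its execution of $\announce(\cmdin)$, in which $\cmdi$ is never submitted. Detaching the $\cmdi$-subsequence $S_i (M_i | P_i) (M_i | Q_i \setminus P_i) R_i$ from $\sigma_{i+1}$ is legal by repeated applications of \reflem{tools:flp}, \reflem{roll:1}, and \reflem{roll:2}, combined with the disjointness guaranteed by (F1)--(F5). In $\mu$, Optimal Latency item (iii) forces the return value $D_{i+1}$ of $\announce(\cmdin)$ at $q_i$ to mention only previously announced commands, excluding $\cmdi$. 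Indistinguishability then propagates this conclusion to $\sigma_{i+1}$; since the run is nice and $q_i$ is the sole proposer to $\cons_{\cmdin}$, Validity of consensus gives $\deps(\cmdin) = D_{i+1}$ and therefore $\cmdi \notin \deps(\cmdin)$. Consistency finally delivers $\cmdin \in \deps(\cmdi)$, closing the induction step.

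The hard part will be the clean construction of $\mu$: detaching all $\cmdi$-related steps while leaving $q_i$'s local trace intact demands careful bookkeeping of message causality across several reorderings. Fortunately this is a routine application of the left-move arguments already developed in the proofs of \reflem{roll:1} and \reflem{roll:2}, combined with the isolation of $\cmdin$'s announcement from $Q_i$ and from $\coord(\cmdi)$ that is ensured by facts (F4) and (F5).
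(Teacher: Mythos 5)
Your reduction of the claim targets the wrong intermediate fact, and the route to it is unsound. You propose to rule out the disjunct $\cmdi \in \deps(\cmdin)$ and then invoke Consistency. But in Leaderless SMR the dependency graph may be cyclic, so Consistency is satisfied as soon as \emph{either} disjunct holds; a protocol in which $\announce(\cmdin)$ returns $\cmdi$ among its dependencies is perfectly conforming (Optimal Latency only requires that every returned dependency was announced before, and $\cmdi$ \emph{was} announced inside $\Gamma_{i+1}$), and in that case your argument yields nothing about $\deps(\cmdi)$. Worse, under ROLL-optimality you cannot hope to exclude this: $\hat{P}_{i+1}$ must consist of \emph{all} $F$ processes outside $Q_i$, hence it contains the coordinator of $\cmd{i-1}$ (which lies outside $Q_i$), and that process has already computed $\deps(\cmd{i-1}) \ni \cmdi$ during $R_{i-1}$ in $\Gamma'_i$; nothing in the $\dds$ specification prevents its reply to $\cmdin$ from injecting $\cmdi$ into $D_{i+1}$. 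Relatedly, your run $\mu$ is obtained by \emph{deleting} every $\cmdi$-related step from $\sigma_{i+1}$, but \reflem{tools:flp}, \reflem{roll:1} and \reflem{roll:2} only reorder steps, append fresh announcements, or truncate to prefixes --- none of them removes a submitted command --- and deletion is not innocuous for deterministic automata: the processes of $P_i$ receive the request for $\cmdi$ in $(M_i|P_i)$ \emph{before} replying to $\cmd{i-1}$ in $\Gamma'_i$, so excising $(M_i|P_i)$ changes their subsequent replies, and the indistinguishability you need does not follow.

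The paper applies indistinguishability at the other end, to $\coordi$ rather than to $\coordin$, which is what makes the argument go through without ever needing $\cmdi \notin \deps(\cmdin)$ in $\sigma_{i+1}$. Using the induction clause $(S_i (M_i|P_i)(\Gamma_i'|\procSet\setminus\{q_i\})) \lmove (\Gamma_i'|q_i)$, it reorders a prefix of $\sigma_i$, truncates to $\Gamma_i(\Gamma_i'|\coordin)$, and appends \emph{fresh} announcements via \reflem{roll:1} to obtain $\sigma' = \Gamma_i(\Gamma_i'|\coordin)(SMR)_{i+1}(SMR)_i$. There $\announce(\cmdin)$ returns before $\cmdi$ is even announced, so Optimal Latency forces $\cmdi \notin \deps(\cmdin)$ and Consistency has no choice but to resolve as $\cmdin \in \deps(\cmdi)$. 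A further sequence of reorderings shows that $\coordi$ takes exactly the same steps in (a prefix of) $\sigma_{i+1}$, and since the value of $\deps(\cmdi)$ at $\coordi$ is a function of those steps, the new link transfers to $\sigma_{i+1}$. You should restructure your proof along these lines: force the resolution of Consistency inside a hypothetical run where the two announcements are temporally separated, and export only the conclusion about $\coordi$'s local decision.
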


\begin{proof}
  We now prove that a chaining effect occurs between the two commands $\cmdi$ and $\cmdin$ in $\sigma_{i+1}$.
  To this end, we first construct a nice run $\sigma$ in which $\cmd{i+1}$ is committed with $\cmd{i} \notin \deps(\cmd{i+1})$.
  Then, we establish that $\sigma$ is indistinguishable from $\sigma_{i+1}$ to the coordinator of $\cmdi$.
  This implies that $\coordi$ must add $\cmdin$ to the dependencies of $\cmdi$ in $\sigma_{i+1}$.

  With more details, our reasoning is as follows.
  First, consider $\Gamma_i S_i (M_i | P_i) \Gamma_i'$ that prefixes $\sigma_{i}$.
  Since $\coordin=q_i$ and $S_i (M_i|P_i) \lmove \Gamma_i'|q_i$ by our induction hypothesis, this run is equivalent to $\Gamma_i (\Gamma_i' | \coordin) S_i (M_i | P_i) (\Gamma_i' | \procSet \setminus \coordin)$.
  From which the prefix $\Gamma_i (\Gamma_i' | \coordin)$ is obtained.
  Applying \reflem{roll:1} to both $\cmd{i}$ and $\cmdin$ leads to $\sigma' \equaldef \Gamma_i (\Gamma_i' | \coordin) (SMR)_{i+1} (SMR)_i$.
  This run clearly satisfies that $\cmdin \rightarrow \cmdi$.

  The run $\sigma$ is then derived from a series of rewriting of $\sigma'$.
  Applying \reflem{roll:1} to $(SMR)_i$ then \reffact{qf:5} leads to $\Gamma_i (\Gamma_i' | \coordin) S_i (M_i | P_i) (SMR)_{i+1} (M_i | Q_i \setminus P_i) R_i$. 
  Following the same approach, $\Gamma_i (\Gamma_i' | \coordin) S_i (M_i | P_i) S_{i+1} (M | P_{i+1}) (M_i | Q_i \setminus P_i) R_i (M | Q_{i+1} \setminus P_{i+1}) R_{i+1}$
  The run $\sigma$ is then defined as $\Gamma_i (\Gamma_i' | \coordin) S_i (M_i | P_i) S_{i+1} (M | P_{i+1}) (M_i | Q_i \setminus P_i) R_i$.
  We observe that, since $\coordi$ takes the same steps in both $\sigma$ and $\sigma'$, then $\cmdin \rightarrow \cmdi$ holds in $\sigma$.

  Let us then examine the steps of $\coordi$ in the run $\sigma_{i+1}$.
  To this end, consider $\Gamma_i S_i (M_i | P_i) \Gamma_i'  S_{i+1} (M | P_{i+1}) (M_i | Q_i \setminus P_i) R_i$ that prefixes $\sigma_{i+1}$.
  By the assumption hypothesis $\mathfrak{P}(i)$, the steps $\Gamma_i'|\coordin$ left-move with $S_i (M_i | P_i) (\Gamma_i'|\procSet \setminus \coordin)$.
  This leads to the run $\Gamma_i (\Gamma_i' | \coordin) S_i (M_i | P_i) (\Gamma_i' | \procSet \setminus \{\coordin\})  S_{i+1} (M | P_{i+1}) (M_i | Q_i \setminus P_i) R_i$
  Applying again the assumption hypothesis, $\proc(\Gamma_i' | \procSet \setminus \{\coordin\}) \inter Q_i = \emptySet$.
  This leads to the following equivalent run $\Gamma_i (\Gamma_i' | \coordin) S_i (M_i | P_i) S_{i+1} (M | P_{i+1}) (M_i | Q_i \setminus P_i) R_i (\Gamma_i' | \procSet \setminus \{\coordin\})$.
  We pick the prefix $\Gamma_i (\Gamma_i' | \coordin) S_i (M_i | P_i) S_{i+1} (M | P_{i+1}) (M_i | Q_i \setminus P_i) R_i$.
  In this prefix, $\coordi$ takes the same steps as in $\sigma$.
  Hence, $\cmdin \rightarrow \cmdi$ holds in $\sigma_{i+1}$.
\end{proof}

\begin{claim}
  \labclaim{chaining:4}
  $\pending(\sigma_{i+1}) = \emptySet$
\end{claim}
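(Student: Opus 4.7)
My plan is to unfold the definition of $\sigma_{i+1}$ and argue that every message sent in it is also received in it, by splitting the steps into two independent groups: those belonging to the announcement of $\cmdin$ and those belonging to $\sigma_i$.

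First I would observe that by construction
\[
  \sigma_{i+1} = \Gamma_i S_i (M_i|P_i) \Gamma_i' S_{i+1} (M_{i+1}|P_{i+1}) (M_i|Q_i \setminus P_i) R_i (M_{i+1}|Q_{i+1} \setminus P_{i+1}) R_{i+1},
\]
i.e., $\sigma_{i+1}$ contains exactly the steps of $\sigma_i$ together with the extra steps $S_{i+1} M_{i+1} R_{i+1}$ of announcing $\cmdin$. Since the predicate $\pending(\cdot) = \emptySet$ depends only on the multiset of $\send$/$\recv$ events (each sent message has a matching receive), establishing emptiness on each of the two groups suffices.

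For the $\sigma_i$ portion, the induction hypothesis $\mathfrak{P}(i)$ directly gives $\pending(\sigma_i) = \emptySet$; hence every message sent during a step of $\sigma_i$ is also received during a step of $\sigma_i$, and these same sends/receives appear in $\sigma_{i+1}$. For the $\cmdin$ portion, I would appeal to \reflem{roll:1} which yields that $S_{i+1} M_{i+1} R_{i+1}$ is a nice run with $\latency(M_{i+1}) = 0$; combined with the Load Balancing property (which requires $\pending(\lambda'|\announce(\cmdin)) = \emptySet$ when exercising a fast path quorum), every message sent during these steps is received within them.

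Combining the two observations, every $\send$ event appearing in $\sigma_{i+1}$ is matched by a $\recv$ event also in $\sigma_{i+1}$, which is exactly $\pending(\sigma_{i+1}) = \emptySet$. The only thing that could go wrong is a message sent in one group being received in the other, but that cannot happen because each announcement only exchanges messages between the coordinator and its fast path quorum, and these exchanges are fully completed within their own group by the two facts just established. I do not expect any subtle obstacle here; the claim is essentially a bookkeeping consequence of the induction hypothesis and \reflem{roll:1}.
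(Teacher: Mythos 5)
Your proposal is correct and follows essentially the same route as the paper: the induction hypothesis $\mathfrak{P}(i)$ handles the messages of $\sigma_i$, and the Load Balancing property (which explicitly requires $\pending(\lambda'|\announce(\cmdin)) = \emptySet$) handles the messages of the new announcement steps $S_{i+1} M_{i+1} R_{i+1}$. Your extra remark ruling out cross-group messages is a harmless bookkeeping addition that the paper leaves implicit.
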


\begin{proof}
  By induction, no message is pending in $\sigma_i$.
  Consider then a message sent in $(SMR)_{i+1}$.
  By Load-Balancing, this message is received in this sequence of steps.
\end{proof}

\begin{claim}
  \labclaim{chaining:5}
  $\forall \rho \in \cpaths(\Gamma_{i+1}') \ldotp \cardinalOf{\rho} \leq 1$
\end{claim}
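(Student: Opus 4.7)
The plan is to combine the structural description of the $\announce$ operation given by \reflem{roll:0} with the fact that $\latency(M_i) = 0$ from \reflem{roll:1}, together with the observation that $R_i$ consists solely of steps of $\coord(c_i)$. These ingredients should be enough to bound by one the number of $\send/\recv$ pairs appearing in any chain of $\Gamma'_{i+1} = (M_i | Q_i \setminus P_i) R_i$.

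First I would classify where a pair $(\send(m), \recv(m))$ with both events lying in $\Gamma'_{i+1}$ can sit. If both events belonged to $(M_i | Q_i \setminus P_i) \subseteq M_i$, then $M_i$ would contain a causal path of size one, contradicting $\latency(M_i) = 0$ from \reflem{roll:1}. If $\send(m)$ belonged to $R_i$, then since $R_i$ consists only of steps of $\coord(c_i)$ taken after the first reply is received and before $\announce(c_i)$ returns, \reflem{roll:0} together with Optimal Latency would imply that any receive of $m$ falling inside $\Gamma = \lambda|\announce(c_i)$ extends the canonical invocation-to-response path from two to three message delays; so no such receive exists in $\Gamma$, and a fortiori none in $\Gamma'_{i+1}$. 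The only remaining possibility is therefore $\send(m) \in (M_i | Q_i \setminus P_i)$ and $\recv(m) \in R_i$, that is, $m$ is a reply addressed to $\coord(c_i)$.

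Second, I would show that a single chain cannot contain two such reply pairs. Both sends would lie at processes of $Q_i \setminus P_i$ distinct from $\coord(c_i)$, and $\latency(M_i) = 0$ implies that these processes do not exchange any message inside $M_i$. Hence two send events at two distinct such processes are $\hb$-incomparable, and a chain, being totally ordered by $\hb$, cannot contain both. Combined with the previous step this yields $|\rho| \leq 1$, as required.

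The main obstacle will be ruling out sends in $R_i$ whose receives fall inside $\Gamma'_{i+1}$. It is the only genuinely indirect step, as it requires combining the two-message-delay bound given by Optimal Latency with the canonical form of announce paths in \reflem{roll:0} in order to exclude a third causal hop originating from a coordinator step taken after the first reply has been received.
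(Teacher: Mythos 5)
Your proof is correct and takes essentially the same route as the paper: the paper disposes of this claim with the single line ``follows from the definition of $\Gamma_{i+1}'$'', and your case analysis --- no completed exchange inside $M_i$ by \reflem{roll:1}, no receive within the announcement for a message sent in $R_i$ by \reflem{roll:0} and Optimal Latency, and at most one reply hop per chain --- is precisely the reasoning that one-liner leaves implicit.
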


\begin{proof}
  This claims follows from the definition of $\Gamma_{i+1}'$.
\end{proof}

\begin{claim}
  \labclaim{chaining:6}
  $\sigma_{i+1} \in \async{2}$
\end{claim}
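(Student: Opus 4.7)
The plan is to establish that every causal path in $\sigma_{i+1}$ has length at most 2, from which $\sigma_{i+1} \in \async{2}$ follows directly. The argument is a case analysis that leverages both the induction hypothesis $\mathfrak{P}(i)$ and the latency characterization of $c_{i+1}$'s announcement obtained from \reflem{roll:1}.

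First, I would dispatch the paths that stay within a single announcement. By $\mathfrak{P}(i)$, the run $\sigma_i$ is already $\async{2}$, hence every path drawn entirely from $\sigma_i$ has length at most 2. Symmetrically, the $c_{i+1}$-steps partition as $S_{i+1} M_{i+1} R_{i+1}$ under \reflem{roll:1} and have latency 2 in isolation, so every path drawn entirely from these steps also has length at most 2.

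Second, and this is the heart of the argument, I would rule out cross-paths of length 3 that bridge the $c_{i+1}$-steps with messages from $c_i$ or from the earlier announcements contained in $\Gamma_i$. Three structural observations drive the case analysis: (i) all coordinators $\coord_j$ in $\sigma_{i+1}$ are pairwise distinct, by iteratively applying \reffact{qf:4} and \reffact{qf:5}; (ii) each coordinator sends only during its own $S_j$ phase and afterwards only receives, so a receive of a reply at a coordinator cannot be extended by any subsequent send; (iii) processes in $P_{i+1} = Q_i \cap Q_{i+1}$ take their $c_{i+1}$-steps before their remaining $c_i$-steps in $\sigma_{i+1}$. The only chain fragment that can cross the two announcements is therefore a $c_{i+1}$-request from $q_i$ to some $p \in P_{i+1}$ followed by the $c_i$-reply later sent by $p$ to $\coordi$; this fragment has length 2, and it cannot be extended, since $\coordi$'s subsequent activity in $R_i$ contains only receives.

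The main obstacle is to verify that no longer bridge can form through the older announcements recorded in $\Gamma_i$. To discharge this, I would rely on the two auxiliary clauses of $\mathfrak{P}(i)$. The absence of pending messages in $\sigma_i$ forbids a message from an older announcement lingering into $S_{i+1}(M_{i+1}|P_{i+1})$ and being received there in a way that creates a new chain. Moreover, the bound of 1 on the length of paths in $\Gamma'_i$ prevents a length-1 fragment of $\Gamma'_i$ from being glued to the two new messages of $c_{i+1}$'s announcement to form a length-3 chain. Combined with $q_i = \coord_{i+1} \notin Q_i$ from \reffact{qf:4}, which ensures that the sends in $S_{i+1}$ are not causally triggered by any $c_i$-activity, these suffice to close the induction and conclude $\sigma_{i+1} \in \async{2}$.
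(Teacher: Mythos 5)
Your argument rests on a misreading of the definition of $k$-asynchrony, and the property you set out to prove is in fact false of $\sigma_{i+1}$. A run is $\async{2}$ when no single message \emph{overlaps} with a path of size greater than $2$, i.e., is concurrent with both the first and the last message of that path; it is \emph{not} required that every causal path in the run have length at most $2$. The run $\sigma_{i+1}$ necessarily contains much longer paths --- this is exactly how the chain of commands is threaded together. Concretely, the coordinator of $\cmd{i+1}$ is $q_i$, a member of $Q_{i-1}$ that receives a request of $\cmd{i-1}$ in $\Gamma_i'$ before emitting the requests of $\cmd{i+1}$ in $S_{i+1}$; gluing the request of $\cmd{i-1}$ delivered to $q_i$, a request of $\cmd{i+1}$ sent by $q_i$, and the corresponding reply already yields a path of size $3$, and iterating through earlier announcements yields paths whose size grows with $i$. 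Hence your opening reduction (``every causal path in $\sigma_{i+1}$ has length at most $2$''), your claim that $\mathfrak{P}(i)$ bounds by $2$ the length of every path inside $\sigma_i$, and your attempt to ``rule out cross-paths of length 3'' all target a statement that does not hold and cannot be repaired.

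What must be shown instead is a statement about concurrency: for every message $m$ of $\sigma_{i+1}$ and every path $\rho$ such that $m$ is concurrent with both endpoints of $\rho$, one has $\cardinalOf{\rho}\leq 2$. The paper's proof fixes $m$, performs a case analysis on the segment of $\sigma_{i+1}$ in which $\recv(m)$ occurs (namely $\Gamma_{i+1}$, then $S_{i+1}(M_{i+1}|P_{i+1})$, then $\Gamma_{i+1}'$, then $(M_{i+1}|Q_{i+1}\setminus P_{i+1})R_{i+1}$), and in each case bounds the size of any path overlapping $m$, using the induction hypothesis, the absence of pending messages, $\latency(M_{i+1})=0$, the size-one bound on paths of $\Gamma_i'$, and the left-mover clause of $\mathfrak{P}(i)$. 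You do invoke several of these tools, so the raw material is present, but the case analysis has to be redone with overlap, not absolute path length, as its target.
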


\begin{proof}


  Assume that a message $m$ is sent in $\sigma_{i+1}$.
  From \refclaim{chaining:5}, $m$ is received in $\sigma_{i+1}$.
  Below, we conduct a case analysis depending on the position of $\recv(m)$ in $\sigma_{i+1}$.
  Our analysis shows that any path $\rho$ concurrent to $m$ is at most of length two.
  \begin{itemize}
    
  \item ($\Gamma_{i+1}$)
    If $\rho$ is concurrent to $m$ in $\sigma_{i+1}$, it is also concurrent to $m$ in $\sigma_i$.
    The induction hypothesis implies that $\cardinalOf{\rho} \leq 2$.

  \item ($S_{i+1} (M_{i+1} | P_{i+1}$)
    $\sigma_{i}$ does not contain a pending message.
    Thus, $m$ is sent by $\coord{i+1}$ in $S_{i+1}$.
    Applying \reflem{roll:1}, $\latency(M_{i+1}) = 0$ implies that $\cardinalOf{\rho}=1$.

  \item ($\Gamma_{i+1}'$)      
    By Load-Balancing, $m$ is sent in $(SMR)_i$.
    If $\rho$ ends in $\Gamma_i'$, then by the induction hypothesis $\rho$ is at most of size 2.
    Otherwise, the last message in $\rho$ is sent in $S_{i+1} (M_{i+1} | P_{i+1})$ by the coordinator of $\cmd{i+1}$.
    In that case, $(\Gamma_i'|\coord{i+1}) \lmove S_{i} (M_{i} | P_{i}) (\Gamma_i'|\procSet \setminus \coord{i+1})$ implies that $\cardinalOf{\rho}=1$.
    
  \item ($(M_{i+1} | Q_{i+1} \setminus P_{i+1}) R_{i+1}$)  
    $\send(m)$ occurs during the same sequence of steps, or it happens in $S_{i+1} (M_{i+1} | P_{i+1})$.
    The former case leads to $\cardinalOf{\rho}=1$.
    In the later, as no message sent in $\Gamma_{i+1}'$ is pending, $\rho$ is fully included in $S_{i+1} (M_{i+1} | P_{i+1}) \Gamma_{i+1}$.
    By \refclaim{chaining:5}, every such path in is (at most) of size 2.
    
  \end{itemize}

\end{proof}

\fi
\end{document}